\renewcommand*{\bibnamedash}{%
	\leavevmode\raise +0.6ex\hbox to 5.5ex{\hrulefill}.\space\space}
\newenvironment{proposition}
{\pushQED{\qed}\propositionx}
{\popQED\endpropositionx}
\newenvironment{propositionp}
{\pushQED{\qed}\propositionx}
{\popQED\endpropositionx}
\newenvironment{theorem}
{\pushQED{\qed}\theoremx}
{\popQED\endtheoremx}
\theoremstyle{remark}
\newtheorem{remark}{Remark}[section]
\newcommand{\dd}{\,\mathrm{d}}
\newcommand{\bbC}{\mathbb{C}}
\newcommand{\bbN}{\mathbb{N}}
\newcommand{\bbR}{\mathbb{R}}
\newcommand{\calD}{\mathcal{D}}
\newcommand{\calH}{\mathcal{H}}
\newcommand{\calL}{\mathcal{L}}
\newcommand{\calS}{\mathcal{S}}
\newcommand{\calV}{\mathcal{V}}
\title{Massive waves gravitationally bound to static bodies}
\author{Ethan Sussman}
\date{December 1st, 2023 (Last update). November 15, 2022 (Draft)}
\email{ethanws@stanford.edu}
\address{Department of Mathematics, Stanford University, California, USA}
\subjclass[2020]{Primary 35P05. Secondary 35P15, 81Q05.}
\begin{document}

\begin{abstract}
	We show that, given any static spacetime whose spatial slices are asymptotically Euclidean (or, more generally, asymptotically conic) manifolds modeled on the large end of the Schwarzschild exterior, there exist stationary solutions to the Klein--Gordon equation having Schwartz initial data. In fact, there exist infinitely many independent such solutions. The proof is a variational argument based on the long range nature of the effective potential. We give two sets of test functions which serve to verify the hypothesis of the variational argument. One set consists of cutoff versions of the hydrogen bound states and is used to prove the existence of eigenvalues near the hydrogen spectrum.
\end{abstract}

\maketitle

\tableofcontents

\section{Introduction}

In classical Newtonian gravity, massive particles can be bound to the gravitational potential-well generated by another body. 
Solutions to the Klein--Gordon equation 
\begin{equation}
	\square U + m^2 U = 0
\end{equation}
serve as wavefunctions for massive scalar particles in relativistic quantum mechanics, so it is to be expected that they can get gravitationally bound, in some suitable sense, to astrophysical bodies. 
Here, $\square$ is the d'Alembertian of the spacetime, with the sign chosen so that the spatial Laplace--Beltrami operator is positive semidefinite.
One manifestation of gravitational binding should be a lack of temporal decay, but this intuition should be taken with a grain of salt for at least two reasons:
\begin{itemize}
	\item in classical Newtonian gravity, the mass of a particle is irrelevant to its orbital motion, but solutions to the massless wave equation $\square U = 0$ (on astrophysical spacetimes) do actually decay, specifically at a $\sim t^{-3}$ rate, a fact known as \textit{Price's law} \cite{PriceI, PriceII}, and 
	\item it has been predicted by physicists that, on the exact Schwarzschild exterior and some of its relatives, solutions to the Klein--Gordon equation also decay, but at a different rate, namely $\sim t^{-5/6}$ \cite{HodTsvi}\cite{KoyamaTomimatsu, KoyamaTomimatsu2}\cite{BurkoKhanna}\cite{KonoplyaETAL}\cite{BarrancoETAL}.  
\end{itemize}
We consider in this note a broad class of static spacetimes whose asymptotic structure is given by the \textit{large end} of the Schwarzschild (or, more generally, Reissner--Nordstr\"om)  exterior. A precise definition appears below. One key example is any static spacetime whose spatial slices are isometric to the large end of the Schwarzschild exterior outside of some compact subset. The exact Schwarzschild exterior is excluded. 
This is because the Schwarzschild exterior has two ends -- the ``large'' end, where $r\to\infty$, and the horizon -- whereas the ``admissible'' metrics considered here only have one. 
Admissible metrics appear in nature as the gravitational field configurations generated by static astrophysical bodies lacking the necessary density to form a black hole. 
As such, they provide a model for the gravitational field of stars, planets, etc.\ in the limit where the angular momentum is negligible. 
Price's law applies to such spacetimes. In this generality, this has been proven rigorously by Hintz \cite{HintzPrice} --- see also \cite{DSS1, DSS2}\cite{TataruPrice}\cite{Tatarut3}\cite{AAG1,AAG2}\cite{Morgan, MorganWunsch}\cite{Looi}. 
On the other hand, confirming (or disconfirming) physicists' predictions regarding Klein--Gordon on exact Schwarzschild remains an open problem. In fact, proving even $o(1)$ decay remains an open problem. However, there has been very recent progress in the case when the initial data involves only finitely many spherical harmonics \cite{Pasqualotto}.

Our main goal is to prove that not even $o(1)$ decay applies to admissible spacetimes:
\begin{theorem}
	Let $\bbR_t\times X$ denote an admissible spacetime. Then, for each $m>0$, there exists an infinite sequence $\{E_n\}_{n=1}^\infty$ of $E_n \in (0,m^2)$ with $E_n\downarrow 0$ such that there exists, for each $n\in \bbN^+$, a Schwartz function $u_n:X\to \bbR$,
	not identically 0, such that 
	\begin{equation}
		U_n(t,-) = e^{i t\sqrt{m^2-E_n} }u_n
	\end{equation}
	satisfies $\square U_n + m^2 U_n = 0$. 
\end{theorem}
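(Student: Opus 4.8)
The plan is to separate variables. Writing the admissible static metric as $g = -c^2\dd t^2 + h$ on $\bbR_t\times X$, with lapse $c>0$ and spatial metric $h$ both independent of $t$ and $c\to 1$ at the single (asymptotically conic) end, the stationary ansatz $U(t,x) = e^{it\omega}u(x)$ with $\omega = \sqrt{m^2-E}$ converts $\square U + m^2 U = 0$ into the spatial problem
\[
	(P + m^2)u = \lambda\, c^{-2}u, \qquad \lambda := m^2 - E,
\]
where $Pu = -c^{-1}\mathrm{div}_h(c\,\nabla_h u)\ge 0$ is the weighted spatial Laplacian, self-adjoint and nonnegative on $L^2(X, c\dd\mu_h)$. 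The crucial structural point is that this is a \emph{genuinely linear} eigenvalue pencil in $\lambda$: the spectral parameter does not enter $P$, $c$, or $m$, so there is no self-consistency to resolve. Since admissibility forces $c$ to be bounded between positive constants with $c\to 1$, the weight $c^{-2}$ is a positive bounded multiplication, and the essential spectrum of the pencil coincides with that of $P+m^2$, namely $[m^2,\infty)$. Hence it suffices to exhibit infinitely many eigenvalues $\lambda_n\in(0,m^2)$; being a discrete set below the essential spectrum, they can accumulate only at the threshold $m^2$, giving $\lambda_n\uparrow m^2$, i.e.\ $E_n = m^2-\lambda_n\downarrow 0$. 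Each $\lambda_n>0$ then makes $\omega_n = \sqrt{\lambda_n}$ real, so $U_n = e^{it\sqrt{\lambda_n}}u_n$ is a non-decaying stationary solution, as required.

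Next I set up the variational detection of eigenvalues. Subtracting the threshold, an eigenvalue strictly below $m^2$ is recorded by negativity of the quadratic form
\[
	q(u) = \int_X\Big(|\nabla_h u|_h^2 - m^2\tfrac{1-c^2}{c^2}\,|u|^2\Big)\,c\dd\mu_h,
\]
and the Courant--Fischer min-max principle guarantees at least $N$ eigenvalues below $m^2$ as soon as $q<0$ on some $N$-dimensional subspace of $C_c^\infty(X)$. This negativity is exactly the hypothesis of the variational argument that the test functions must verify. Its availability rests on the long-range structure of the effective potential: admissibility models $c^2 = 1 - 2M/r + O(r^{-2})$ at the large end with positive mass $M>0$, so $m^2(1-c^2)/c^2 = 2m^2 M/r + O(r^{-2})$ is an \emph{attractive, Coulombic} tail and the effective operator behaves at infinity like the hydrogen Hamiltonian $-\Delta - 2m^2 M/r$.

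The heart of the matter is then the test-function construction exploiting this long range. Fix $\phi\in C_c^\infty$ supported in a Euclidean annulus and, for a large scale $L$, set $u_L = \phi(\cdot/L)$, supported near radius $L$ in the end. In the (three-dimensional) conic end, scaling yields $\int|\nabla_h u_L|_h^2\,c\dd\mu_h\sim L$ while $\int r^{-1}|u_L|^2\,c\dd\mu_h\sim L^2$, so the Coulomb gain overwhelms the kinetic cost and $q(u_L)<0$ once $L$ is large; this is precisely where the borderline $1/r$ decay is essential, since a faster (short-range) tail $O(r^{-2-\varepsilon})$ would only give finitely many bound states. Choosing widely separated scales $L_1\ll\cdots\ll L_N$ makes the supports disjoint, hence the $u_{L_j}$ orthogonal with $q$ block-diagonal, so $q<0$ on their $N$-dimensional span. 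As $N$ is arbitrary, infinitely many eigenvalues lie below $m^2$, which establishes the existence and accumulation claims of the theorem. A second, sharper family of test functions, smooth cutoffs of the exact bound states of $-\Delta - 2m^2 M/r$, localizes the eigenvalues near the hydrogenic values (which scale like $n^{-2}$), yielding the quantitative statement about the hydrogen spectrum.

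Finally, the eigenfunctions are Schwartz. Each $\lambda_n$ sits a positive distance $E_n$ below the essential spectrum, so Combes--Thomas/Agmon estimates give exponential spatial decay of $u_n$, and elliptic regularity (the coefficients being smooth) propagates this to all derivatives, so $u_n\in\mathcal{S}(X)$; the decay rate $\sim\sqrt{E_n}$ worsens as $n\to\infty$ but is positive for each fixed $n$. I expect the main obstacle to be carrying out the variational estimate honestly in the genuine asymptotically conic geometry rather than on flat space: one must control the error terms coming from $h$ and from $c^2 - (1-2M/r)$, verify positivity of the effective mass $M$ from the admissibility hypotheses, and, because the spectral gap $E_n\downarrow 0$ precludes any uniform-in-$n$ estimate, extract the accumulation purely from the min-max count.
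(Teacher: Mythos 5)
Your proposal is correct and follows essentially the same route as the paper: separation of variables, reduction to a self-adjoint spectral problem on $X$ (your weighted pencil $(P+m^2)u=\lambda c^{-2}u$ is the paper's rescaled operator $P$ on the weighted $L^2$ space in disguise), and a min-max argument using disjointly supported annular bump functions at widely separated large radii, whose negativity comes from exactly the same scaling computation (kinetic $\sim L^{d-2}$ versus Coulomb gain $\sim L^{d-1}$ from the long-range attractive tail of the lapse), after which infinitude plus discreteness below the essential spectrum forces $E_n\downarrow 0$. The only cosmetic differences are that the paper multiplies through by the lapse so as to work with an honest operator rather than a pencil, and it deduces Schwartz regularity of the eigenfunctions from ellipticity in Melrose's scattering calculus rather than from Combes--Thomas/Agmon decay plus elliptic bootstrapping.
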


So, on any admissible spacetime, there exist temporally non-decaying solutions to the Klein--Gordon equation. 
This contrasts with the situation for massless waves, for which the asymptotic structure at infinity is intimately related to wave decay \cite{Morgan}\cite[\S4.3]{Hintz}. The decay of massive waves on the exact Schwarzschild exterior (assuming that such decay does in fact occur) is \textit{not} due to the asymptotic structure of the spacetime at the large end. The rough conjecture here would be that massive waves with insufficient kinetic energy do not radiate away from a black hole but rather fall towards the horizon. As the admissible spacetimes considered here look like the Schwarzschild exterior but lack a horizon, there is nowhere for the mass to go, and so solutions to the Klein--Gordon equation need not decay. 

In the body of the paper we will also consider the Klein--Gordon--Schr\"odinger equation, in which a short range potential has been added to the Klein--Gordon operator.

We start with the elementary observation that, given any stationary spacetime $(\bbR_t\times X,g)$, with $g$ constant in $t$, there exists a 1-parameter family 
\begin{equation} 
\{P(\sigma)\}_{\sigma \in \bbC} = \{P_m(\sigma)\}_{\sigma \in \bbC}\subset \operatorname{Diff}^2(X^\circ)
\end{equation} 
of 2nd order differential operators (depending on $m$, though we do not explicitly write this dependence below) on $X$ such that solutions $u \in \smash{\calD'(X)}$ to $P(\sigma)u=0$ yield non-decaying solutions $U$ to the Klein--Gordon equation. When the spacetime is not just stationary but actually static, in addition to asymptotically Schwarzschild (in which case $X$ is regarded as a manifold-with-boundary), then 
\begin{equation} 
	P(\sigma) = P +m^2- \sigma^2
	\label{eq:Ps}
\end{equation} 
is the spectral family of an $m$-dependent scaled Schr\"odinger operator $P=P_m$ with a potential of the form $V_1+V_2$, where $V_1 = - \mathsf{M} m^2/r$ and $V_2$ is a short range potential depending on $m$ and the metric of the spacetime. Thus, we have an \textit{attractive} Coulomb potential
proportional to the Schwarzschild mass $\mathsf{M}>0$ and the Klein--Gordon mass-squared $m^2$. 
This (except, perhaps, for the fact that it is $m^2$ rather than $m$ that shows up) should be unsurprising given the form of the potential in Newtonian gravity.
(We are working here in ``natural units'' with respect to which the Newtonian gravitational constant is given by $G=1/2$.) The low energy scattering theory of such operators was considered in \cite{Sussman} --- this corresponds to the $\sigma \to m^+$ limit. Here, we consider bound states with close to threshold energy, which instead involves the $\sigma \to m^-$ limit. 

The operator $P$, with the $L^2$-based Sobolev space $H^2(X)$ as a domain, is self-adjoint with respect to the inner-product of a carefully chosen $L^2$-space on $X$ (care required due to the rescaling in the definition of $P$), so the spectrum of $P$, defined accordingly, lies on the real axis.  
As is known by virtue of suitable elliptic theory,
\begin{equation}
	\sigma(P) = \{-E_n\}_{n=1}^N \cup [0,\infty),
\end{equation}
even if $\mathsf{M}=0$ or $\mathsf{M}<0$, where $N\in \bbN \cup \{\infty\}$ and may a priori be zero, and $E_1 > E_2\cdots > 0$ is a strictly decreasing sequence of positive real numbers whose only possible accumulation point is zero. Each $E_n$ is an eigenvalue of $P$, with a finite dimensional space of Schwartz eigenfunctions.  
One of very many ways to prove this is using Melrose's sc-calculus \cite{MelroseSC, MelroseGeometric}\cite{VasyGrenoble}, which, as an algebra, consists of the unital algebra $\operatorname{Diff}_{\mathrm{sc}}(X)$ of differential operators on $X^\circ$ generated over $C^\infty(X)$ by the vector fields $\rho V$, for $\rho$ a boundary-defining-function and $V$ a vector field tangent to the boundary.
Indeed, for $\lambda\in \bbC$ with $\lambda\notin [0,\infty)$, the differential operator $P - \lambda$ is elliptic in Melrose's sense, so analytic Fredholm theory applies there.
In this paper, we employ, when $\mathsf{M}>0$, a variational argument in order to show that the number of linearly independent bound states is infinite. So, $N=\infty$.

The proof of this theorem is contained in \S\ref{sec:main}, which is self-contained.
In \S\ref{sec:quasimodes}, we provide a more detailed investigation of the distribution of the eigenvalues of $P$.

On more general spacetimes than the static, horizon-free ones considered here, the family $\{P(\sigma)\}_{\sigma \in \bbC}$ is somewhat more complicated. For instance, on non-static stationary spacetimes $(\bbR_t\times X,g)$, with $g$ constant in $t$, 
\begin{equation}
	P(\sigma) = P  + i \sigma Q  + m^2 - \sigma^2
\end{equation}
for some first-order differential operator $Q\neq 0$ on $X^\circ$ with real coefficients. Thus, $P(\sigma)$ is no longer a spectral family, and the techniques below no longer apply. As indicated by \cite{Rothman}, the situation can be quite different. The presence of an event horizon complicates matters further, as it obstructs appeals to Fredholm theory (such as those below).
This is most easily illustrated on the exact Schwarzschild exterior, where the radial part $R(\sigma)$ of $P(\sigma)$ is 
\begin{equation}
	R(\sigma)= - \frac{\partial^2}{\partial r_*^2} - \frac{2}{r} \Big( 1 - \frac{\mathsf{M}}{r} \Big) \frac{\partial}{\partial r_*} + m^2 - \sigma^2  - \frac{\mathsf{M} m^2}{r} 
\end{equation}
with respect to the tortoise coordinate $r_*=r+ \mathsf{M} \log(\mathsf{M}^{-1} r-1)$. Since the second-order term is the Laplacian on $\bbR_{r_*}$, it makes sense to analyze this ordinary differential operator in $\operatorname{Diff}_{\mathrm{sc}}(\bbR_{r_*})$.
The large end of the spacetime corresponds to the $r_*\to \infty$ limit, where
\begin{equation}
	m^2 - \sigma^2  - \frac{\mathsf{M} m^2}{r}   = m^2 - \sigma^2 + O \Big( \frac{1}{r_*} \Big), 
\end{equation}
so $R(\sigma)$ is elliptic there, as an element of $\operatorname{Diff}_{\mathrm{sc}}(\bbR_{r_*})$, if $\sigma^2 < m^2$. 
The horizon corresponds to the $r_*\to -\infty$ limit, in which 
\begin{equation}
	 m^2 - \sigma^2  - \frac{\mathsf{M} m^2}{r}  =  - \sigma^2 + O (e^{-|r_*|/\mathsf{M}}), 
\end{equation}
so $R(\sigma)$ is \textit{not} elliptic there.

In fact, on the exact Schwarzschild exterior, $P$ has no bound states, as can be shown by an elementary calculation involving Wronskians for the radial ODE. A version of the variational argument still goes through, but rather than conclude the existence of infinitely bound states, we can only conclude that $\sigma(P)\cap (-\infty,0)$ is infinite. This is consistent with the continuous spectrum being $\sigma_{\mathrm{cont}}(P) = [-m^2,\infty)$ and the pure-point spectrum being empty.

\section{Variational argument}
\label{sec:main}

Fix $\delta \in (0,1]$. 
Consider a static Lorentzian spacetime of the form $(\bbR_t\times X,g)$, where $X$ is a compact $d\in \bbN^+$ dimensional manifold-with-boundary and $g$ is a Lorentzian metric of the form 
\begin{equation}
	g = -(1+x\aleph+x^{1+\delta} \beth)\cdot  \mathrm{d}t^2 + g_{X},
\end{equation}
where $\aleph\in \bbR$, $\beth \in S^0(X;\bbR)$, and $g_X$ is a (symbolic) asymptotically Riemannian conic metric on $X$ that is classical to subleading order and symmetric to subleading order in the radial-radial direction, i.e.\ a Riemannian metric of the form 
\begin{equation}
	g_X = \Big( \frac{1}{x^4} + \frac{\mathsf{M}}{x^3} \Big) \dd x^2 + \frac{h_{\partial X}}{x^2} + \frac{\Gamma_{1,\partial X} \odot \dd x}{x^2} + \frac{h_{1,\partial X}}{x} + x^{1+\delta} h_X
\end{equation}
with respect to some boundary collar $\iota:[0,\bar{x})_x \times \partial X \to X$, where $\bar{x}\in (0,\infty)$ and $x\in C^\infty(X;[0,\infty))$ denotes a boundary-defining function, and where the other terms are  
\begin{itemize}
	\item a Riemannian metric $h_{\partial X}$ on $\partial X$,
	\item a constant $\mathsf{M}\in \bbR$,
	\item a symbolic family of 1-forms $\Gamma_{1,\partial X} \in S^0([0,\bar{x})_x;\Omega^1(\partial X))$,
	\item a symbolic family of (not-necessarily positive semidefinite) symmetric 2-tensors $h_{1,\partial X}\in \smash{S^0([0,\bar{x})_x;C^\infty(\partial X;\operatorname{Sym}^2 T^* \partial X))}$,
\end{itemize}
and a symbolic remainder 
\begin{equation} 
	h_X \in S^0(\operatorname{Sym} {}^{\mathrm{sc}}T^* X ).
\end{equation} 
We say that the given spacetime is \emph{admissible} if, in addition to the requirements above, $\aleph<0$. 
We refer to \cite{MelroseSC}\cite{Sussman} for undefined notational conventions.

The condition that $g$ is Lorentzian means that $1+x \aleph + x^{1+\delta} \beth >0$ everywhere, so $(1+x \aleph + x^{1+\delta} \beth)^{\alpha}$ defines an element of $C^\infty(X;\bbR^+)$ for every $\alpha\in \bbR$. For the spacetimes of physical interest, 
\begin{equation} 
	\aleph = - \mathsf{M},
\end{equation} 
although we do not enforce this relation. 
For Reissner--Nordstr\"om-like metrics, $\delta=1$, and $\beth|_{\partial X}$ is constant, being related to the electric charge of the astrophysical body generating the gravitational field.

A straightforward calculation yields:
\begin{propositionp}
	The d'Alembertian $\square =- |g|^{-1/2} \sum_{j,k=0}^d \partial_j( |g|^{1/2}  g^{jk}\partial_k)$ has the form
	\begin{equation}
		\square =  \frac{1}{1+x\aleph + x^{1+\delta}\beth} \frac{\partial^2}{\partial t^2}  - \frac{1}{2} \frac{x^{1+\delta} \nabla \beth + (\aleph+(1+\delta) x^\delta\beth)\nabla x }{1+x \aleph + x^{1+\delta}\beth} + \triangle ,
	\end{equation}
	where $\triangle$ is the \emph{positive semidefinite} Laplace--Beltrami operator of the Riemannian manifold $(X,g_X)$, which we consider as an operator on $\bbR_t\times X$. Near $\partial X$, $\triangle$ has the form  
	\begin{equation}
		\triangle =  - \Big( 1 - \frac{\mathsf{M}}{r}\Big) \frac{\partial^2}{\partial r^2}+ \frac{1}{r^2} \triangle_{\partial X} - \frac{d-1}{r} \frac{\partial}{\partial r}  + \frac{1}{r}Q + S^{-1-\delta} \operatorname{Diff}_{\mathrm{sc}}^2(X;\bbR)
	\end{equation}
	with respect to the given boundary collar, where $r=1/x$, where $\triangle_{\partial X}$ is the (positive semidefinite) Laplace--Beltrami operator of $(\partial X,h_{\partial X})$ and $Q \in S^{0} \operatorname{Diff}_{\mathrm{sc}}^2(X;\bbR)$ has the form 
	\begin{equation}
		Q =  \frac{1}{r} Q_\perp \frac{\partial}{\partial r}+\frac{1}{r^2} Q_\partial  + \frac{1}{r} Q_{1,\partial}
	\end{equation}
	for $Q_\perp , Q_{1,\partial} \in S^0([0,\bar{x})_x ; \calV(\partial X;\bbR))$ and $Q_{\partial} \in S^0([0,\bar{x})_x ; \calV(\partial X;\bbR)^2\oplus \calV(\partial X;\bbR))$, where $\calV(\partial X;\bbR)$ is the space of vector fields on $\partial X$ with real coefficients.
	\label{prop:squareform}
\end{propositionp}

See \cite[Proposition 6.1]{Sussman} for details regarding the computation of $\triangle$. 

Note the absence of zeroth order terms in $Q_\perp,Q_{1,\partial},Q_\partial$, as such terms can be absorbed into the $S^{-1-\delta} \operatorname{Diff}_{\mathrm{sc}}^2(X;\bbR)$ error.

Fix $V\in S^{-1-\delta}(X;\bbR)$ and $m>0$. 
Consider the rescaled Schr\"odinger operator $P=P_m$ on $X^\circ$ given by 
\begin{equation}
	P =(1+x \aleph+x^{1+\delta} \beth) \triangle - \frac{1}{2} (x^{1+\delta} \nabla \beth + (\aleph +(1+\delta) x^\delta \beth) \nabla x)  +V_{\mathrm{eff}},
\end{equation}
where $V_{\mathrm{eff}} \in x \bbR +x^{1+\delta} S^0(X;\bbR)$ is given by $V_{\mathrm{eff}}=x m^2 \aleph+   x^{1+\delta} m^2\beth  + (1+x\aleph+x^{1+\delta} \beth)V$. Observe that $\nabla \beth \in S^{-1} \operatorname{Diff}_{\mathrm{sc}}^1(X)$ and $\nabla x \in x^2 S^0\operatorname{Diff}_{\mathrm{sc}}^1(X)$. 

At the level of sets, $L^2(X,\dd \mathrm{Vol}_{g_X})=L^2(X,(1+x\aleph+x^{1+\delta}\beth)^{-1/2} \dd \mathrm{Vol}_{g_X})$. We use `$\calS(X)$' to denote the set of Schwartz functions on $X$, and we abbreviate $\calH = L^2(X,(1+x\aleph+x^{1+\delta}\beth)^{-1/2} \dd \mathrm{Vol}_{g_X})$.

Let $P(E) = P+E$. Note that this parametrization convention differs from \cref{eq:Ps}.
\begin{proposition}
	$P:H^2(X)\to L^2(X)$ defines a lower-semibounded self-adjoint operator on $L^2(X,(1+x\aleph+x^{1+\delta}\beth)^{-1/2} \dd \mathrm{Vol}_{g_X})$.
\end{proposition}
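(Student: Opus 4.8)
The plan is to verify, in order, that $P$ is a bounded map $H^2(X)\to L^2(X)$, that it is symmetric and bounded below on $\calH$, and finally that it is self-adjoint on the domain $H^2(X)$; only the last point is substantial. Writing $w = 1+x\aleph+x^{1+\delta}\beth\in C^\infty(X;\bbR^+)$, \cref{prop:squareform} exhibits $\triangle\in\operatorname{Diff}^2_{\mathrm{sc}}(X)$, while the remarks following the definition of $P$ give that its first-order part lies in $\operatorname{Diff}^1_{\mathrm{sc}}(X)$ and that $V_{\mathrm{eff}}$ is a bounded real-valued function; hence $P\in\operatorname{Diff}^2_{\mathrm{sc}}(X)$ and $P:H^2(X)\to L^2(X)$ is bounded.

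For symmetry and the lower bound, the key observation is that $\nabla w = x^{1+\delta}\nabla\beth + (\aleph+(1+\delta)x^\delta\beth)\nabla x$, so the first-order part of $P$ is precisely $-\tfrac12\nabla w$, acting on functions by $f\mapsto -\tfrac12\langle\operatorname{grad}w,\operatorname{grad}f\rangle_{g_X}$. Moving the factor $w$ past $w^{-1/2}$ in the measure and integrating by parts --- with no boundary contribution for $u,v\in\calS(X)$ --- the term $\langle w\triangle u,v\rangle_{\calH}$ produces the symmetric Dirichlet form together with an extra contribution $+\tfrac12\langle(\nabla w)u,v\rangle_{\calH}$, which is exactly cancelled by the first-order part. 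Thus
\begin{equation}
	\langle Pu,v\rangle_{\calH} = \int_X w^{1/2}\langle\operatorname{grad}u,\operatorname{grad}\bar v\rangle_{g_X}\dd\Vol_{g_X} + \int_X V_{\mathrm{eff}}\,u\bar v\,w^{-1/2}\dd\Vol_{g_X} .
\end{equation}
Both terms are Hermitian-symmetric in $(u,v)$, so $P$ is symmetric on $\calS(X)$, and this extends to $H^2(X)$ by density and continuity. Taking $u=v$ and using $w>0$ together with the boundedness of $V_{\mathrm{eff}}$ yields $\langle Pu,u\rangle_{\calH}\ge (\inf_X V_{\mathrm{eff}})\,\|u\|_{\calH}^2$, i.e.\ lower semiboundedness.

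It remains to upgrade symmetry to self-adjointness. Since $P\subseteq P^*$, it suffices to show $\operatorname{Dom}(P^*)\subseteq H^2(X)$. If $\phi\in\operatorname{Dom}(P^*)$, then $\phi\in L^2$ and $P\phi\in L^2$ distributionally, so $(P+1)\phi\in L^2$. As noted in the introduction, $P-\lambda$ is elliptic in Melrose's scattering sense for $\lambda\notin[0,\infty)$; concretely, at $\lambda=-1$ the scattering symbol is $|\xi|_{\mathrm{sc}}^2+1$, which is everywhere positive (the first-order part $-\tfrac12\nabla w$ and $V_{\mathrm{eff}}$ both vanishing at $\partial X$). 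Hence $P+1$ has a parametrix $G$ in the scattering calculus with $G(P+1)=\mathrm{Id}+R$ and $R$ residual, mapping $L^2\to\calS(X)$. Then $\phi = G(P+1)\phi - R\phi\in H^2(X)$, since $G$ maps $L^2\to H^2(X)$ and $R\phi\in\calS(X)$. Therefore $\operatorname{Dom}(P^*)=H^2(X)=\operatorname{Dom}(P)$ and $P^*=P$, so $P$ is self-adjoint on $H^2(X)$.

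The one genuinely nontrivial step is this last domain identification: it is precisely the \emph{full} scattering-ellipticity of $P+1$ --- invertibility of the boundary symbol, not merely interior ellipticity --- that forces $L^2$-solutions of $(P+1)\phi\in L^2$ into $H^2(X)$, and the choice of a spectral parameter in $(-\infty,0)$ is what makes that boundary symbol invertible.
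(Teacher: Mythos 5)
Your proof is correct, and it reaches the conclusion by a genuinely different route than the paper in both main steps. For symmetry, the paper conjugates: it sets $\tilde{P} = w^{1/4} P w^{-1/4}$ (with $w = 1+x\aleph+x^{1+\delta}\beth$), observes $\tilde{P} = w\triangle + W$ for some $W \in S^0(X;\bbR)$, and transfers the symmetry of $\triangle$ on $L^2(X,\dd\Vol_{g_X})$ to $P$ on $\calH$; you instead exploit the same structural fact --- that the first-order part of $P$ is exactly $-\tfrac{1}{2}\nabla w$ --- to integrate by parts directly and exhibit the Dirichlet form, which has the side benefit of giving the lower bound with the explicit constant $\inf V_{\mathrm{eff}}$ (the paper's constant is $\inf W$, which is less explicit). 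The bigger divergence is in self-adjointness: the paper proves essential self-adjointness of $P|_{\calS(X)}$ via the basic range criterion $\overline{\operatorname{range}(P\pm i)} = L^2(X)$, using elliptic regularity only to conclude that elements of $\ker(P\mp i)$ are Schwartz, and then separately identifies the closure with $P|_{H^2(X)}$ by showing $P:H^2(X)\to L^2(X)$ is closed via the a priori estimate $\lVert u\rVert_{H^2(X)} \preceq \lVert Pu\rVert_{L^2(X)} + \lVert u\rVert_{L^2(X)}$. You go straight for $\operatorname{Dom}(P^*)\subseteq H^2(X)$ by full scattering-ellipticity of $P+1$ and a parametrix, collapsing those two steps into one. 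Your route is shorter but leans on the full strength of the sc-calculus (a parametrix with residual error, for an operator whose coefficients are symbolic rather than classical); the paper's route gets by with softer inputs --- elliptic regularity of the kernel and the elliptic estimate --- at the cost of the extra bookkeeping of closures. Since the paper itself invokes exactly this kind of full sc-ellipticity elsewhere (to justify analytic Fredholm theory and the Schwartz regularity of eigenfunctions), your appeal to it is legitimate in context. One small point worth making explicit: the passage from $\phi\in\operatorname{Dom}(P^*)$ to ``$(P+1)\phi\in L^2$ distributionally'' uses the formal self-adjointness of $P$ with respect to the weighted pairing, i.e.\ precisely the symmetry identity you established on $\calS(X)$; as written, this step is implicit.
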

\begin{proof}
	Let $\tilde{P} =  (1+x\aleph+x^{1+\delta}\beth)^{+1/4} P(1+x\aleph+x^{1+\delta}\beth)^{-1/4}$ denote the symbolic differential operator
	\begin{equation}
		\tilde{P} u = (1+x\aleph+x^{1+\delta}\beth)^{+1/4} P((1+x\aleph+x^{1+\delta}\beth)^{-1/4} u).
	\end{equation} 
This has the form 
\begin{equation} 
\tilde{P} = (1+x\aleph+x^{1+\delta}\beth) \triangle + W
\label{eq:W}
\end{equation} 
for some $W\in S^0(X;\bbR)$, so by the symmetry of $\triangle$ as a bilinear form on $L^2(X, \dd \mathrm{Vol}_{g_X})$, 
	\begin{equation}
		\int_X \frac{u_0^*   \tilde{P} v_0 \dd \mathrm{Vol}_{g_X} }{1+x\aleph + x^{1+\delta}\beth}= \int_X \frac{(\tilde{P}u_0)^*    v_0 \dd \mathrm{Vol}_{g_X}}{1+x\aleph + x^{1+\delta}\beth}
	\end{equation}
	for all $u_0,v_0\in \calS(X)$. Consequently, for all $u,v\in \calS(X)$, 
	\begin{align}
		\begin{split} 
		\int_X \frac{u^*   P v \dd \mathrm{Vol}_{g_X} }{(1+x\aleph + x^{1+\delta}\beth)^{1/2}} &= \int_X \frac{u^*     }{(1+x\aleph + x^{1+\delta}\beth)^{3/4}}  \tilde{P} \Big[ (1+x\aleph+x^{1+\delta}\beth)^{1/4}v \Big] \dd \mathrm{Vol}_{g_X} \\
		&= \int_X \Big[ (1+x\aleph+x^{1+\delta}\beth)^{1/4}u \Big]^*  \tilde{P} \Big[ (1+x\aleph+x^{1+\delta}\beth)^{1/4}v \Big]\;\;\: \frac{\dd \mathrm{Vol}_{g_X}}{1+x\aleph+x^{1+\delta}\beth} \\ 
		&= \int_X \Big[ (1+x\aleph+x^{1+\delta}\beth)^{1/4}v \Big]   \Big[\tilde{P}( (1+x\aleph+x^{1+\delta}\beth)^{1/4}u )\Big]^* \frac{\dd \mathrm{Vol}_{g_X}}{1+x\aleph+x^{1+\delta}\beth} \\
		&= \int_X\frac{v}{(1+x\aleph+x^{1+\delta}\beth)^{3/4}} \Big[\tilde{P}( (1+x\aleph+x^{1+\delta}\beth)^{1/4}u )\Big]^* \dd  \mathrm{Vol}_{g_X} \\
		&= \int_X\frac{(Pu)^* v \dd  \mathrm{Vol}_{g_X}}{(1+x\aleph+x^{1+\delta}\beth)^{1/2}},  
		\end{split}
	\end{align}
	which says that $P$ defines a symmetric bilinear form $(\calS(X)^2,\langle-,-\rangle_{\calH})\to \bbC$. The same computations show that 
	\begin{equation}
		\langle u, Pv \rangle_\calH  = \langle Pu,v \rangle_{\calH} 
		\label{eq:misc_n64}
	\end{equation}
	for all $u\in \calS(X)$ and $v\in L^2(X)$, where the left-hand side is defined as a distributional pairing: for all $v\in \calS'(X)$ and $u\in \calS(X)$, we write 
	\begin{equation}
		\langle u, Pv \rangle_{\calH } = Pv \Big( \frac{u^* \mathrm{d}\operatorname{Vol}_{g_X}}{(1+x\aleph+x^{1+\delta}\beth)^{1/2}} \Big), 
	\end{equation}
	where
	$Pv: \calS(X;|\Lambda^d| T^* X)\to \bbC$ is a tempered distribution. 
	
	In order to conclude that $P:\calS(X)\to L^2(X)$ is essentially self-adjoint with respect to the $L^2(X,(1+x\aleph+x^{1+\delta}\beth)^{-1/2} \dd \mathrm{Vol}_{g_X})$ inner product, it suffices to check that 
	\begin{equation}
		\overline{\operatorname{range}(P \pm i)} = L^2(X)
		\label{eq:misc_br4}
	\end{equation} 
	for both choices of sign \cite[Chp. VIII, \S2]{RS1}, where  $\operatorname{range}(P \pm i) = \{Pu\pm iu:u\in \calS(X)\}$.
	Let $\operatorname{ker}(P\mp i) = \{v\in \calS'(X) : Pu = \pm i u\}$. For all $v\in L^2(X) \subset \calS'(X)$, we have, via \cref{eq:misc_n64},  
	\begin{align}
		\begin{split} 
		v \in \operatorname{range}(P\pm i)^\perp &\iff \langle (P\pm i) u, v \rangle_\calH=0\text{ for all }u\in \calS(X) \\ 
		&\iff \langle  u, P v \rangle_\calH \mp \langle u,i v \rangle_{\calH}  = \langle  u, (P\mp i) v \rangle_\calH =0\text{ for all }u\in \calS(X) \\ 
		&\iff (P \mp i) v = 0 \Rightarrow v \in \operatorname{ker}(P\mp i).
		\end{split} 
	\end{align}
	So, $\operatorname{range}(P\pm i)^\perp \subseteq \operatorname{ker}(P\mp i)$. By elliptic regularity, $\operatorname{ker}(P\mp i)$ consists entirely of Schwartz functions.  Thus, if $v \in \operatorname{range}(P\pm i)^\perp$, then
	\begin{equation}
		0 = \langle v,(P\pm  i) v \rangle_{\calH}  = \langle v,Pv\rangle_{\calH} \pm i \lVert v \rVert_{\calH}^2. 
	\end{equation}
	Since the first term on the right-hand side is real by symmetry (using the fact that $v$ is Schwartz, so as to be able to appeal to the computations above), this forces $v=0$. So, $\operatorname{range}(P\pm i)^\perp=\{0\}$. Since 
	\begin{equation}
		\overline{\operatorname{range}(P \pm i)}= (\operatorname{range}(P\pm i)^\perp)^\perp,
	\end{equation}
	\cref{eq:misc_br4} follows. 
	
	We now know that $P:\calS(X)\to L^2(X)$ is essentially self-adjoint with respect to the $L^2(X,(1+x\aleph+x^{1+\delta}\beth)^{-1/2} \dd \mathrm{Vol}_{g_X})$ inner product. 
	Let 
	\begin{equation}
		\overline{P} : \calD(\overline{P}) \to L^2(X) 
	\end{equation}
	denote the closure of $P$. 
	It remains only to observe that $\calD(\overline{P}) = H^2(X)$ and that $\overline{P}u$ is the result of applying the differential operator $P$ to $u\in H^2(X)$. 
	\begin{itemize}
		\item Since $P\in \calL(H^2(X),L^2(X))$, any closure of $P:\calS(X)\to L^2(X)$ contains $H^2(X)$ in its domain and acts on this domain in the expected way. So, $\calD(\overline{P})\supseteq H^2(X)$, and $\overline{P}$ extends $P:H^2(X)\to L^2(X)$. 
		\item It can be shown that $P:H^2(X)\to L^2(X)$ is closed using the estimate 
		\begin{equation}
			\lVert  u \rVert_{H^2(X)} \preceq \lVert P u \rVert_{L^2(X)} + \lVert u \rVert_{H^1(X)}  +\lVert u \rVert_{L^2(X)} \preceq \lVert P u \rVert_{L^2(X)} + \lVert u \rVert_{L^2(X)}, 
			\label{eq:misc_029}
		\end{equation}
		where the second inequality is deduced from the first via the interpolation estimate 
		\begin{equation} 
		\lVert u \rVert_{H^1(X)} \preceq \lVert P u \rVert_{L^2(X)} + \lVert u \rVert_{L^2(X)},
		\end{equation} 
		and where `$a\preceq b$' denotes $a\leq Cb$ for some unspecified constant $C$ that can depend on the spacetime considered but not on the functions involved in the definitions of $a,b$.  
		If $\{u_n\}_{n=0}^\infty \subset H^2(X)$ satisfies $u_n \to u$ in $L^2(X)$ for some $u\in L^2(X)$, and if $Pu_n \to v$ in $L^2(X)$ for some $v\in L^2(X)$, then \cref{eq:misc_029} implies that $\{u_n\}_{n=0}^\infty$ is Cauchy in $H^2(X)$, and the $H^2$-limit is also an $L^2$-limit and therefore $u$, so 
		\begin{equation}
			u_n \to u \in H^2(X),
		\end{equation} 
		which also implies $v=P u$. 
	\end{itemize}
	Combining the previous two observations, we conclude that $\overline{P} = P$. 
	
	For all $u\in H^2(X)$, $\langle u,Pu \rangle_{\calH}$ is given by
	\begin{equation}
	\int_X \frac{u^*   P u \dd \mathrm{Vol}_{g_X} }{(1+x\aleph +x^{1+\delta}\beth)^{1/2}} = \int_X  \frac{u_0^* \tilde{P} u_0\dd \mathrm{Vol}_{g_X}}{1+x\aleph +x^{1+\delta}\beth}  = \langle u_0,\triangle u_0 \rangle_{L^2(X,\dd \mathrm{Vol}_{g_X}\!)} + \int_X \frac{W |u|^2 \dd \mathrm{Vol}_{g_X}}{(1+x\aleph +x^{1+\delta}\beth)^{1/2}}, 
	\end{equation}
	where $W$ is as in \cref{eq:W} and $u_0=(1+x\aleph + x^{1+\delta} \beth)^{1/4}u$. 
	From the semidefiniteness of $\triangle$ on $L^2(X,\dd \mathrm{Vol}_{g_X}\!)$, we conclude that $\langle u,Pu\rangle_\calH \geq (\inf W) \lVert u \rVert_{\calH}^2$. So, $P$ is lower-semibounded. 
\end{proof}

\begin{proposition}
	If $\aleph<0$, there exists some infinite sequence $\{v_n\}_{n=1}^\infty \subseteq C_{\mathrm{c}}^\infty(X^\circ)$ such that $\operatorname{supp} v_n \cap \operatorname{supp} v_{n'} = \varnothing$ if $n\neq n'$ and $\langle v_n, P v_n \rangle_{L^2(X,(1+x\aleph+x^{1+\delta}\beth)^{-1/2} \dd \mathrm{Vol}_{g_{ X}}) } < 0$ for all $n$.
	\label{prop:mainlemma}
\end{proposition}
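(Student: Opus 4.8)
The plan is to exploit the \emph{attractive} long-range tail of the effective potential. Since $V_{\mathrm{eff}} = \aleph m^2/r + O(r^{-1-\delta})$ with $\aleph<0$, the leading part of the potential is $\sim -|\aleph|m^2/r$, which decays more slowly than the $r^{-2}$ scale governing the kinetic (Dirichlet) energy. Consequently, trial functions supported on shells that recede toward $\partial X$ should have negative energy, and by placing these shells on disjoint, geometrically separated annuli we obtain the desired sequence. This is the standard reason a Coulomb-like well supports infinitely many bound states; here it must be carried out against the symbolic background of $P$.

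Concretely, I would fix a single bump $0\neq\chi\in C_{\mathrm c}^\infty((1,2);\bbR)$ and, in the boundary collar with $r = 1/x$, set $v_n = \chi(r/R_n)$, taken constant along the $\partial X$-slices and extended by zero, with $R_n = 3^n R_\ast$. For $R_\ast$ large, $v_n$ is supported in the shell $\{R_n < r < 2R_n\}$, a compact subset of $X^\circ$, and the supports are pairwise disjoint because $2R_n < R_{n+1}$. Since $v_n$ is radial, $\triangle_{\partial X} v_n = 0$, and the explicit form of $Q$ in \cref{prop:squareform} gives $Q v_n = 0$ as well; thus on $\operatorname{supp} v_n$ only the radial part of $\triangle$, the scalar $V_{\mathrm{eff}}$, the first-order term of $P$, and the symbolic remainders act nontrivially.

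I would then expand $\langle v_n, P v_n\rangle_{\calH}$ using $\dd\Vol_{g_X} = (1+O(1/r))\, r^{d-1}\dd r\, \dd\Vol_{h_{\partial X}}$ and $(1+x\aleph+x^{1+\delta}\beth)^{-1/2} = 1+O(1/r)$. A change of variables $s = r/R_n$ shows that the Dirichlet energy $\int_X g_X^{rr}|v_n'|^2\,\dd\Vol_{g_X} = C_1 R_n^{d-2}(1+o(1))$ is positive of size $R_n^{d-2}$, while the potential energy equals $\aleph m^2 C_2 R_n^{d-1}(1+o(1))$ with $C_1, C_2>0$. Every other contribution is strictly lower order: the first-order term of $P$ (a gradient vector field with coefficient $O(r^{-2})$, hence gaining a factor $v_n' = O(1/R_n)$) is $O(R_n^{d-3})$, the $\mathsf M/r$-correction to $g_X^{rr}$ is $O(R_n^{d-3})$, and the short-range potential $V$, the $x^{1+\delta}$-tail of $V_{\mathrm{eff}}$, and the $S^{-1-\delta}\operatorname{Diff}_{\mathrm{sc}}^2$ remainder in $\triangle$ are each $O(R_n^{d-1-\delta})$. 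As $\delta>0$, all of these are $o(R_n^{d-1})$, so
\begin{equation}
\langle v_n, P v_n\rangle_{\calH} = \aleph m^2 C_2\, R_n^{d-1}(1+o(1)) < 0
\end{equation}
for all $R_n$ large; fixing $R_\ast$ large enough makes this hold for every $n\in\bbN^+$.

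The only real content is the bookkeeping of the previous paragraph: verifying that, after integration by parts, the Dirichlet form sits one full power of $R_n$ below the attractive term $\sim\aleph m^2 R_n^{d-1}$, and that each symbolic error class lies strictly below $R_n^{d-1}$ precisely because $\delta>0$. No optimization of $\chi$ is required --- the gap between the $r^{-1}$ potential and the $r^{-2}$ kinetic scale gives a clean power of $R_n$ to spare --- and the same computation is valid verbatim for every dimension $d\in\bbN^+$ (for $d=1$ the potential energy tends to a negative constant while the kinetic energy vanishes).
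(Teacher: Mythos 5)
Your proposal is correct and is essentially the paper's own argument: both use radial bump functions supported on annuli $\{r\sim R\}$ of width comparable to $R$ receding toward $\partial X$ (so that $\triangle_{\partial X}$ and $Q$ annihilate them), and both win by the same scaling count --- kinetic energy at relative order $R^{-2}$ and symbolic errors at order $R^{-1-\delta}$ versus the attractive Coulomb term at order $-R^{-1}$. The only differences are cosmetic: the paper normalizes its bumps as $v[\lambda]=\lambda^{-d/2}\chi((r-\lambda)/\lambda)$ and bounds the kinetic and error terms by Cauchy--Schwarz rather than via the Dirichlet form, while you use unnormalized bumps on an explicit geometric sequence of shells.
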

\begin{proof}
	By \Cref{prop:squareform}, there exists some $Q_0\in S^{0} \operatorname{Diff}_{\mathrm{sc}}^2(X)$ such that
	\begin{equation} 
		P = - \Big( 1 - \frac{\mathsf{r}_0}{r} \Big) \frac{\partial^2}{\partial r^2} - \frac{d-1}{r} \frac{\partial}{\partial r}  + \frac{m^2 \aleph}{r}  + \Big(1+\frac{\aleph}{r}\Big) \frac{\triangle_{\partial X}}{r^2} + \frac{1}{r} Q  + \frac{1}{r^{1+\delta}}Q_0 
	\end{equation} 
	near $\partial X$, where $\mathsf{r}_0\in \bbR$ is defined by $\mathsf{r}_0 = \mathsf{M}- \aleph$. We will work with $v$ supported in the boundary collar, with respect to which we impose that $v$ depends only on $r$. Then, $\triangle_{\partial X}v,Qv=0$.
	Thus, 
	\begin{equation} 
	\langle v,Pv \rangle_\calH = \Big\langle v , - \Big( 1 - \frac{\mathsf{r}_0}{r} \Big) \frac{\partial^2v}{\partial r^2} - \frac{d-1}{r} \frac{\partial v}{\partial r}  + \frac{m^2 \aleph v}{r}  \Big\rangle_\calH +\Big\langle v, \frac{1}{r^{1+\delta}} Q_0 v \Big\rangle_\calH .
	\end{equation} 
	Since $\aleph<0$, this yields 
	\begin{equation}
	\langle v,Pv \rangle_\calH \leq - \frac{m^2 |\aleph|}{R_0} \lVert v \rVert_{\calH^2} +  \Big\langle v , - \Big( 1 - \frac{\mathsf{r}_0}{r} \Big) \frac{\partial^2v}{\partial r^2} - \frac{d-1}{r} \frac{\partial v}{\partial r}   \Big\rangle_\calH +\Big\langle v, \frac{1}{r^{1+\delta}} Q_0 v \Big\rangle_\calH
	\label{eq:misc_036}
	\end{equation}
	if $v$ is supported in $\{r\leq R_0\}$. 
	Fix nonzero $\chi \in C_{\mathrm{c}}^\infty(\bbR;\bbR^{\geq 0})$ with $\chi(0)=1$ and $\operatorname{supp} \chi \Subset (-1,\infty)$, and let $v[\lambda](r) = \lambda^{-d/2}\chi( (r - \lambda) / \lambda)$ for $\lambda\geq 1$. If $\lambda$ is sufficiently large, then this is supported in the boundary collar, and we can consider $v\in C_{\mathrm{c}}^\infty(X^\circ)$. Also, this is supported in $\{R\leq r\leq R_0\}$ for $R_0=O(\lambda)$ and $R=\Omega(\lambda)$, so \cref{eq:misc_036} applies. 
	
	We can write the density $(1+x\aleph + x^{1+\delta}\beth)^{-1/2} \dd \mathrm{Vol}_{g_{X}}$ near $\partial X$ as $(1+x\aleph + x^{1+\delta}\beth)^{-1/2} \dd \mathrm{Vol}_{g_{X}} \in r^{d-1}(1 +  S^{-1}(X)) \dd r \dd \mathrm{Vol}_{h_{\partial X}}$. 
	Thus, if $v(r)$ is supported in $(R,\infty)_r$ for $R$ sufficiently large, which we denote by $R\gg 0$, we can estimate 
	\begin{equation}
	(1-cR^{-1})\lVert r^{(d-1)/2} v \rVert^2_{L^2(R,\infty)}\leq \operatorname{Vol}_{h_{\partial X}}(\partial X)^{-1} \lVert v \rVert_{\calH}^2 \leq (1+CR^{-1})\lVert r^{(d-1)/2} v \rVert^2_{L^2(R,\infty)}
	\label{eq:fundbd}
	\end{equation}
	for some $c,C>0$. 
	For each $\kappa \in \bbN$, 
	\begin{equation} 
	\lambda^{2\kappa} \lVert r^{(d-1)/2}\partial_r^\kappa v[\lambda] \rVert_{L^2(R,\infty)}^2 = 
	\int_0^\infty \Big( \frac{r}{\lambda}\Big)^d \chi^{(\kappa)}\Big( \frac{r}{\lambda} - 1 \Big)^2  \frac{\dd r}{r} = \int_{0}^\infty  \rho^d \chi^{(\kappa)}(\rho-1) \frac{\dd \rho}{\rho}
	\end{equation} 
	is independent of $\lambda \gg 0$.
	So, from \cref{eq:misc_036} and Cauchy--Schwarz, we get $\langle v,P v \rangle_{\calH} \leq - c /\lambda + O(\lambda^{-1-\delta})$ for some other $c>0$. This is negative if $\lambda$ is sufficiently large. Taking a sequence of $\lambda_1,\lambda_2,\cdots$ sufficiently large, the supports of the $v[\lambda_n]$ are disjoint. 
\end{proof}

\begin{proposition}
	If $\aleph<0$, then there exists some infinite sequence $\{E_n\}_{n=1}^\infty \subset \bbR^+$ such that  $E_n\downarrow 0$ as $n\to\infty$ and such that there exist $L^2(X,(1+x\aleph+x^{1+\delta}\beth)^{-1/2} \dd \mathrm{Vol}_{g_{ X}})$-orthonormal $u_1,u_2,\cdots\in  \calS(X)$ such that $P(E_n) u_n = 0$. 
	\label{prop:infinitude}
\end{proposition}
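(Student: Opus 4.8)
The plan is to deduce the proposition from \Cref{prop:mainlemma} via the min--max principle (Glazman's lemma), using the fact, recalled in the introduction, that the essential spectrum of $P$ is $[0,\infty)$. Indeed, for $\lambda \notin [0,\infty)$ the operator $P-\lambda$ is elliptic in the sc-calculus and therefore Fredholm, so the portion of $\sigma(P)$ lying below $0$ consists of isolated eigenvalues of finite multiplicity whose only possible accumulation point is $0$. The whole content of the proposition is that there are infinitely many of these; min--max will detect them one subspace at a time.

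First I would record the orthogonality that makes the disjointly supported test functions usable. Let $v_1, v_2, \dots \in C_{\mathrm{c}}^\infty(X^\circ)$ be as produced by \Cref{prop:mainlemma}. Since $P$ is a differential (hence local) operator, $\operatorname{supp}(P v_{n'}) \subseteq \operatorname{supp}(v_{n'})$, and as the supports are pairwise disjoint we get $\langle v_n, P v_{n'}\rangle_\calH = 0$ and $\langle v_n, v_{n'}\rangle_\calH = 0$ for $n \neq n'$. Consequently, on $V_k = \operatorname{span}\{v_1, \dots, v_k\}$, any $u = \sum_{n=1}^k c_n v_n$ satisfies $\langle u, P u\rangle_\calH = \sum_{n=1}^k |c_n|^2 \langle v_n, P v_n\rangle_\calH$ and $\lVert u\rVert_\calH^2 = \sum_{n=1}^k |c_n|^2 \lVert v_n\rVert_\calH^2$. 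Because each diagonal form value is strictly negative, the Rayleigh quotient on $V_k$ is a weighted average of negative numbers, so
\[
	\sup_{0 \neq u \in V_k} \frac{\langle u, P u\rangle_\calH}{\lVert u\rVert_\calH^2} = \max_{1 \leq n \leq k} \frac{\langle v_n, P v_n\rangle_\calH}{\lVert v_n\rVert_\calH^2} < 0.
\]

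Next I would invoke the variational characterization. Writing
\[
	\mu_k = \inf_{\substack{V \subseteq \mathcal{Q}(P) \\ \dim V = k}} \ \sup_{0 \neq u \in V} \frac{\langle u, P u\rangle_\calH}{\lVert u\rVert_\calH^2}
\]
for the $k$-th min--max value over $k$-dimensional subspaces $V$ of the quadratic form domain $\mathcal{Q}(P)$ (which contains $C_{\mathrm{c}}^\infty(X^\circ)$), the previous display gives $\mu_k < 0$ for every $k \in \bbN^+$. The min--max principle then asserts that, so long as $\mu_k$ lies strictly below $\inf \sigma_{\mathrm{ess}}(P) = 0$, it equals the $k$-th eigenvalue of $P$ counted from the bottom with multiplicity. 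Hence $P$ has at least $k$ eigenvalues below $0$ for every $k$, so infinitely many. Enumerating them as $-E_1 \leq -E_2 \leq \cdots$ yields $E_n > 0$, and since there are infinitely many such eigenvalues whose only possible accumulation point below $0$ is $0$, we get $E_n \downarrow 0$.

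Finally, each $-E_n$ lies outside the essential spectrum, so $P + E_n$ is sc-elliptic and Fredholm, and elliptic regularity (exactly as in the proof of essential self-adjointness above) forces $\operatorname{ker}(P + E_n) \subseteq \calS(X)$. Choosing an $\calH$-orthonormal basis of each eigenspace and concatenating these, listing each $-E_n$ with its multiplicity, produces orthonormal Schwartz functions $u_1, u_2, \dots$ with $P(E_n) u_n = (P + E_n) u_n = 0$, as required. The only point demanding genuine care is the input that $\inf \sigma_{\mathrm{ess}}(P) = 0$: this is what guarantees that the strictly negative min--max values detect honest discrete eigenvalues rather than the edge of a continuum, and it is precisely here that the short-range nature of $V_{\mathrm{eff}}$ and the sc-ellipticity of $P - \lambda$ for $\lambda \notin [0,\infty)$ are used.
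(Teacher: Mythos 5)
Your proof is correct and follows essentially the same route as the paper: the disjointly supported negative-energy test functions of \Cref{prop:mainlemma}, the min--max principle combined with $\sigma_{\mathrm{ess}}(P)\subseteq[0,\infty)$ (via sc-ellipticity/Fredholm theory), and sc-elliptic regularity to get Schwartz eigenfunctions and finite multiplicities. The only difference is cosmetic: you use the inf--sup form of min--max over $k$-dimensional subspaces of the form domain, which lets you test directly with $\operatorname{span}\{v_1,\dots,v_k\}$, whereas the paper uses the sup--inf (Courant--Fischer) form and therefore needs a short dimension-counting argument to produce a vector in that span orthogonal to given $\varphi_1,\dots,\varphi_{n-1}$; the substance is identical.
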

\begin{proof}
	If $u\in \calS'(X)$ satisfies $P(E)u=0$ for $E>0$, then $u\in \calS(X)$, since $P(E) = P +E$ is an elliptic element of the sc-calculus on $X$. So, we need only construct $u_1,u_2,\cdots$ as elements of $L^2(X)$, and then they are automatically Schwartz.  

	Via analytic Fredholm theory, $\sigma(P) \cap (-\infty,0) = \sigma_{\mathrm{pp}}(P)\cap (-\infty,0)$, 
	and $\sigma_{\mathrm{pp}}(P)\cap (-\infty,0)$ has no accumulation points within $(-\infty,0)$. 
	So, $\sigma_{\mathrm{cont}}(P) \subseteq [0,\infty)$. 
	(In fact, equality holds: $\sigma_{\mathrm{cont}}(P) = [0,\infty)$.)  So (using the fact that $P$ is lower-semibounded), we can conclude the proposition from the claim that 
	$\sigma_{\mathrm{pp}}(P) \cap (-\infty,0)$ is infinite. 
	First, let 
	\begin{equation}
		\mu_n  = 
		\begin{cases} 
		\inf \{  \lVert v \rVert_{\calH}^{-2} \langle v ,Pv \rangle_{\calH} :v\in L^2(X)\backslash \{0\}\}  & (n=1),\\ 
		\operatorname{sup}\{ \inf \{  \lVert v \rVert_{\calH}^{-2} \langle v ,Pv \rangle_{\calH} :v\in \{\varphi_1,\cdots,\varphi_{n-1}\}^\perp\backslash \{0\}\}   : \varphi_1,\ldots,\varphi_{n-1} \in L^2(X)\} & (n\geq 2),
		\end{cases} 
	\end{equation}
	for each $n\in \bbN^+$, where the orthogonal complements here and below are taken in $\calH$. 
	
	From the previous proposition, 
	\begin{equation} 
		\mu_n \leq \max \{ \lVert v_{j} \rVert_{\calH}^{-2} \langle v_{j},Pv_{j}\rangle_{\calH} : j=1,\ldots,n \}< 0=\inf \sigma_{\mathrm{cont}}(P)
		\label{eq:misc_049}
	\end{equation} 
	for all $n$. 
	Indeed, any $\varphi_1,\ldots,\varphi_{n-1} \in L^2(X)$ have the form $\varphi_k = \phi_k+ \sum_{j=1}^n a_{j,k} v_j^\circ$ for $\phi_k \in \smash{\{v_1,\ldots,v_n\}^\perp}$, where $\smash{v_j^\circ} = v_j / \lVert v_j \rVert_\calH$ and $a_{j,k}\in \bbC$. Let $a_k \in \bbC^n$ be the vector with components $(a_{1,k},\ldots,a_{n,k})$. As the dimension of the span of $a_1,\ldots,a_{n-1}$ is at most $n-1$, there exists some vector $a_n = (a_{1,n},\dots,a_{n,n}) \in \bbC^n$ of norm $1$ orthogonal to all of $a_1,\ldots,a_{n-1}$. Set $v = \sum_{j=1}^n a_{j,n} v_j^\circ \in \calH$.
	Because the $v_j^\circ$ have disjoint support and are therefore orthogonal in $\calH$, $\lVert v \rVert_{\calH}=1$, and $\langle v,\varphi_k \rangle_\calH = \langle a_n,a_k \rangle_{\bbC^n} = 0$, so $v\in \smash{\{\varphi_1,\ldots,\varphi_{n-1}\}^\perp }\backslash \{0\}$. Finally, because $P$ is a differential operator and therefore \emph{local}, $\langle v_j, P v_k \rangle_{\calH}=0$ for all $j\neq k$, so that $\langle v,P v \rangle_\calH = \sum_{j=1}^n |a_{j,n}|^2 \langle v_j^\circ, Pv_j^\circ \rangle_{\calH}$.

	Via the min-max version of the variational principle \cite[Thm. XIII.1]{RS4}, we conclude from \cref{eq:misc_049} that there exist infinitely many negative eigenvalues of $P$. This is counted with multiplicity, so this does not rule out the possibility that $\sigma_{\mathrm{pp}}(P)\cap (-\infty,0)$ might be finite. However, via the ellipticity of $P+E$ for $E>0$, each negative eigenvalue in fact has finite multiplicity, so we can actually conclude that $\sigma_{\mathrm{pp}}(P) \cap (-\infty,0)$ is infinite. 
\end{proof}

Specific details aside, the previous argument is a version of \cite[Thm. XIII.6a]{RS4}.

Since $P(\sigma)$ has real coefficients, we may take $u_n$ to be $\bbR$-valued without loss of generality. 

Finally, via one last calculation, directly from \Cref{prop:squareform}:
\begin{proposition}
	If $u \in \calS(X)$ satisfies $P(E)u=0$ for some $E>0$, then the function $U:\bbR_t\times X \to \bbC$ given by 
	\begin{equation}
		U(t,-) = 
		\begin{cases}
			e^{\pm i t \sqrt{m^2 - E} } u & (E \leq m^2), \\
			e^{\pm t \sqrt{E-m^2}} u & (E \geq m^2),
		\end{cases}
	\end{equation}
	satisfies the Klein--Gordon--Schr\"odinger equation $(\square+ m^2 + V) U = 0$, for either choice of sign.
\end{proposition}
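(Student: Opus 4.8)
The plan is to verify the claim by direct substitution, exploiting that in every case the temporal factor is an eigenfunction of $\partial_t^2$. Write $\fraka = 1 + x\aleph + x^{1+\delta}\beth$ for the (everywhere positive) Lorentzian factor and $\frakw = x^{1+\delta}\nabla\beth + (\aleph + (1+\delta)x^\delta\beth)\nabla x$ for the first-order spatial operator common to $\square$ and $P$, so that \Cref{prop:squareform} reads $\square = \fraka^{-1}\partial_t^2 - \frac{1}{2}\fraka^{-1}\frakw + \triangle$ while $P = \fraka\triangle - \frac{1}{2}\frakw + V_{\mathrm{eff}}$. The first thing I would record is that, regardless of the sign of $E-m^2$ and regardless of the choice of sign in the exponent, the temporal factor $T(t)$ (equal to $e^{\pm i t\sqrt{m^2-E}}$ or $e^{\pm t\sqrt{E-m^2}}$) satisfies $T'' = (E-m^2)T$; hence $\partial_t^2 U = (E-m^2)U$ in all four cases simultaneously.

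Next I would substitute $U = T(t)u(x)$ into $(\square + m^2 + V)U$. Since $T$ is $x$-independent and the spatial operators $\triangle,\frakw$ act only in $x$, the temporal factor passes through them, giving $(\square + m^2 + V)U = T\cdot\big[(E-m^2)\fraka^{-1}u - \frac{1}{2}\fraka^{-1}\frakw u + \triangle u + (m^2+V)u\big]$. It therefore suffices to show that the bracketed spatial expression vanishes. Because $\fraka>0$ everywhere, I may multiply through by $\fraka$ and instead show
\[
(E-m^2)u - \tfrac{1}{2}\frakw u + \fraka\triangle u + \fraka(m^2+V)u = 0.
\]

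The final step is pure bookkeeping. Expanding $\fraka(m^2+V)u = m^2 u + Vu + (x\aleph + x^{1+\delta}\beth)(m^2+V)u$, the two $m^2u$ terms cancel against the $-m^2 u$ coming from $(E-m^2)u$, and the surviving potential terms $(x\aleph + x^{1+\delta}\beth)m^2 u + Vu + (x\aleph + x^{1+\delta}\beth)Vu$ are exactly $V_{\mathrm{eff}}u$ by the definition $V_{\mathrm{eff}} = xm^2\aleph + x^{1+\delta}m^2\beth + \fraka V$. What remains, $\fraka\triangle u - \frac{1}{2}\frakw u + V_{\mathrm{eff}}u$, reassembles into $Pu$, so the displayed quantity is precisely $Eu + Pu = (P+E)u = P(E)u$, which vanishes by hypothesis.

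I do not anticipate any genuine obstacle: the whole content is the single observation $\partial_t^2 U = (E-m^2)U$ together with the algebraic identity matching the collected $m^2$- and $V$-terms to $V_{\mathrm{eff}}$, the crucial feature being the $-m^2 + m^2$ cancellation that trades the Klein--Gordon mass term for the eigenvalue shift $E$ (consistent with the relation $E = m^2 - \sigma^2$ implicit in the ansatz). The only point requiring minor care is confirming that the first-order term $\frakw$ carries the identical coefficient in $\square$ (after clearing $\fraka^{-1}$) and in $P$, which is immediate from \Cref{prop:squareform} and the definition of $P$.
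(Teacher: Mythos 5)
Your verification is correct and is exactly the ``one last calculation'' the paper alludes to but leaves implicit: substitute the separated ansatz, use $T''=(E-m^2)T$ in all four cases, clear the positive factor $1+x\aleph+x^{1+\delta}\beth$, and observe that the collected $m^2$- and $V$-terms reassemble into $V_{\mathrm{eff}}$ so that the bracket is $(P+E)u=P(E)u=0$. There is no gap, and no difference in approach from the paper's intended proof.
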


Thus, if $u\neq 0$, then, choosing the sign appropriately in the $E\geq m^2$ case, $U$ is a non-decaying solution to the Klein--Gordon--Schr\"odinger equation on $(\bbR_t\times X,g)$.  

\section{Gravitational quasimodes}
\label{sec:quasimodes}

The argument in the previous section gives little information on the eigenvalues of $P$, besides the fact that there are infinitely many. The proof shows that there are $\Omega(-\log E)$ many eigenvalues in $(-\infty,E)$, but this is far from sharp; compare with the hydrogen atom, for which there are $\smash{\Omega(E^{-1/2})}$ such energy levels, counted without multiplicity.

It is natural to try to refine the result using better test functions. This is the purpose of this section.
We take the hydrogen bound states as test functions, the ``quasimodes'' referred to in the section title.

Via this idea, we prove: 

\begin{proposition}
	For $\mathsf{Z} = - m^2 \aleph$ and $\mathsf{r}_0 = \mathsf{M}-\aleph$, let, for each $n\in \bbN^+$ such that $n^2 \geq - \mathsf{r}_0 \mathsf{Z}$,
	\begin{equation} 
	E_n = 
	\begin{cases}
	\frac{\mathsf{Z}^2}{4n^2} & (\mathsf{r}_0=0), \\ 
	\frac{1}{\mathsf{r}_0^2} \big( \mathsf{r}_0 \mathsf{Z} + 2n^2 - 2n\sqrt{ n^2 +  \mathsf{r}_0 \mathsf{Z}} \big) & (\mathsf{r}_0\neq 0).
	\end{cases}
	\end{equation} 
	There exists some $C>0$ such that, for each $n$ as above, there exists an eigenvalue of $P$ in an interval of size $\smash{C n^{-\nu }}$ centered at $-E_n$, where $\nu = \min\{2(1+\delta),3\}$.  
	\label{prop:infinitude2}
\end{proposition}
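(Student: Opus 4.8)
The plan is to exhibit the $E_n$ as the exact eigenvalues of the Coulomb model concealed in $P$ near $\partial X$, to use the corresponding $\ell=0$ hydrogenic states as quasimodes, and to localize a true eigenvalue near each $-E_n$ by a quadratic-form (not norm) argument. First I would specialize the collar form of $P$ recorded in \Cref{prop:squareform} (and used in the proof of \Cref{prop:mainlemma}) to functions of $r=1/x$ alone: the angular terms $\triangle_{\partial X}/r^2$ and $Q$ annihilate radial functions, leaving $-(1-\mathsf{r}_0/r)\partial_r^2-\tfrac{d-1}{r}\partial_r-\tfrac{\mathsf{Z}}{r}+r^{-1-\delta}Q_0$, where $m^2\aleph=-\mathsf{Z}$ and $\mathsf{r}_0=\mathsf{M}-\aleph$. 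Conjugating by $r^{-(d-1)/2}$ removes the first-order term at the cost of a centrifugal term of order $r^{-2}$ (and lower), and multiplying the reduced equation $-(1-\mathsf{r}_0/r)v''-\mathsf{Z}v/r=-Ev$ by $r$ and setting $s=r-\mathsf{r}_0$ turns it into the one-dimensional Coulomb equation $v_{ss}+\big((\mathsf{Z}-E\mathsf{r}_0)/s-E\big)v=0$ on $s>0$. I would then record the elementary fact that this has a bound state (regular at $s=0$, decaying as $s\to\infty$) exactly when the Sommerfeld condition $(\mathsf{Z}-E\mathsf{r}_0)/(2\sqrt E)=n$ holds; solving $\mathsf{r}_0 E+2n\sqrt E-\mathsf{Z}=0$ for $\sqrt E$ reproduces both cases of the stated $E_n$, the constraint $n^2\geq-\mathsf{r}_0\mathsf{Z}$ being precisely what keeps $\sqrt E$ real, with eigenfunction $v_n(s)=N_n\,s\,e^{-\sqrt{E_n}\,s}L^1_{n-1}(2\sqrt{E_n}\,s)$.

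Next I would transplant this to the test function $u_n=\psi\,r^{-(d-1)/2}v_n(r-\mathsf{r}_0)$, with $\psi$ a cutoff to the boundary collar, and compute the Rayleigh quotient $\langle u_n,Pu_n\rangle_\calH/\lVert u_n\rVert_\calH^2$. The Coulomb part contributes exactly $-E_n$, so the error is the sum of the expectations, against the (essentially normalized) state $v_n$, of the terms dropped from the model: the centrifugal $O(r^{-2})$ term, the short-range $r^{-1-\delta}Q_0$ term, and the commutator $[P,\psi]$ supported near the inner edge of the collar. Everything reduces to the hydrogenic moments $\langle s^{-p}\rangle_n\sim n^{-\min\{2p,\,3\}}$ (bulk-dominated $n^{-2p}$ for $p<3/2$, inner-dominated $n^{-3}$ for $p>3/2$, the $n^{-3}$ reflecting the $\sim n^{-3/2}$ amplitude of the $\ell=0$ state in the inner region). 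The dominant term is $\langle s^{-1-\delta}\rangle_n\sim n^{-\min\{2(1+\delta),\,3\}}=n^{-\nu}$ from the short-range potential, while the centrifugal ($\langle s^{-2}\rangle_n\sim n^{-3}$) and cutoff ($\sim n^{-3}$) contributions are $\lesssim n^{-\nu}$; hence $\langle u_n,Pu_n\rangle_\calH/\lVert u_n\rVert_\calH^2=-E_n+O(n^{-\nu})$.

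The remaining, and main, difficulty is to upgrade this expectation-value estimate to a genuine eigenvalue within $Cn^{-\nu}$ of $-E_n$. The naive bound $\operatorname{dist}(-E_n,\sigma(P))\leq\lVert(P+E_n)u_n\rVert_\calH/\lVert u_n\rVert_\calH$ is useless: because the $\ell=0$ state decays only algebraically ($\sim n^{-3/2}$) in the inner region, the residual has norm $\sim n^{-3/2}\gg E_n\sim n^{-2}$, so this does not even confine the nearby spectrum to $(-\infty,0)$, and Temple's inequality fares no better, the variance $\lVert(P+E_n)u_n\rVert_\calH^2\sim n^{-3}$ being comparable to the $\sim n^{-3}$ gap between consecutive levels. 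The point is that the large part of the residual sits where the eigenfunction is small, so it shifts the energy only by $O(n^{-3})$. I would therefore argue with quadratic forms: a Dirichlet--Neumann bracketing across a fixed sphere $r=R$ decouples $P$ into a compact interior piece, contributing an $n$-independent finite number of negative eigenvalues (all eventually below $-E_n$, since $-E_n\uparrow 0$), and a collar piece whose negative eigenvalues I would compare to the exactly solvable Coulomb model; the discrepancies are exactly the $O(n^{-\nu})$ expectation-value corrections above together with an $O(n^{-3})$ shift from relocating the inner boundary condition, quantified by the $\sim n^3$ sensitivity of the decaying Coulomb solution's logarithmic derivative to $E$ (equivalently by $v_n(R)^2\sim n^{-3}$). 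Matching the counting functions then forces an eigenvalue of $P$ into the required interval.

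The hardest part, and where I expect to spend the most care, is this last step. When $\delta\geq 1/2$ one has $\nu=3$, so the target interval is comparable to the $\sim n^{-3}$ spacing between consecutive levels $-E_n$, and the bracketing must be sharp enough to pair each model level with a single eigenvalue of $P$: controlling the $O(1)$ Dirichlet--Neumann counting discrepancy against this spacing, and ensuring the interior contributes no spurious crossing across the shrinking interval, is the crux. When $\delta<1/2$, where $\nu=2(1+\delta)<3$, the interval is much wider than the level spacing and the localization is correspondingly softer, so a cruder comparison suffices there.
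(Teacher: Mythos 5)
Your setup---the radial reduction, the quantization condition $4E_nn^2=(\mathsf{Z}-\mathsf{r}_0E_n)^2$ and its solution, the cut-off $\ell=0$ hydrogenic quasimodes, and the moment asymptotics $\langle s^{-p}\rangle_n\sim n^{-\min\{2p,3\}}$---all match the paper. But the claim on which your whole plan pivots, that the naive bound $\operatorname{dist}(-E_n,\sigma(P))\leq\lVert(P+E_n)u_n\rVert_\calH/\lVert u_n\rVert_\calH$ is useless, is false: it is an artifact of your decision to cut off at the \emph{fixed} inner edge of the collar. The paper cuts off with $\chi(2\mathsf{Z}E_n^{-1}(r-\mathsf{R}_0)^{-1})$, so that both transition regions sit at radii $r\sim n^2$, comparable to the Bohr radius of the $n$-th state. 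There, every derivative landing on the cutoff costs a factor of order $E_n\sim n^{-2}$: the terms where $\chi''$ or $\chi'$ hits $u[E_n]$ carry prefactors $O(E_n^2)$, and the term $\chi'\,u'[E_n]$ carries $O(E_n)$ against $\lVert\bar\chi\,u'[E_n]\rVert\preceq n^{-1}$ (\Cref{prop:der_ctr}, the momentum scale), so the residual is $O(n^{-3})\lVert v\rVert_\calH$; the short-range term contributes $O(n^{-2(1+\delta)})$ because the quasimode is supported in $r\gtrsim n^2$. The total is $O(n^{-\nu})=o(E_n)$, so the naive bound immediately produces a spectral point within $Cn^{-\nu}$ of $-E_n$, which is negative and hence an eigenvalue. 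The genuine cost of this choice---and this is the actual content of the paper's section---is that the quasimode retains only the annulus $\{n^2/\Lambda\lesssim r\lesssim n^2/\lambda\}$ of the hydrogen state, so one must prove a definite fraction of its $L^2$ mass survives the truncation; the paper does this in \Cref{prop:cutoff_est} via the Kramers--Pasternack moments and Markov's inequality, which is what forces the inner cutoff to lie inside the classically allowed region ($\Lambda_0$ large; $\Lambda<1$ would kill the norm).

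Your substitute, Dirichlet--Neumann bracketing at a fixed sphere, has a genuine gap exactly at the step you call the crux. The collar piece is not the radial Coulomb ODE you analyze: it is a PDE on $(R,\infty)_r\times\partial X$, and near $-E_n$ its negative spectrum contains on the order of $n^{d-1}$ eigenvalues coming from all angular sectors. Consequently the Dirichlet--Neumann counting discrepancy you call $O(1)$ is not $O(1)$: even for the separable model, D and N eigenvalues interlace \emph{per sector}, and the number of sectors carrying spectrum below $\lambda\sim-n^{-2}$ grows like $n^{d-1}$, so $N_N^{\mathrm{ext}}-N_D^{\mathrm{ext}}$ near threshold is of the same order as the cluster counts you would be trying to match; the inequality that is supposed to force an eigenvalue of $P$ into the interval compares two quantities of size $n^{d-1}$ with no ordering supplied. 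Worse, the perturbations $r^{-1-\delta}Q_0$ and $r^{-1}Q$ are not functions of $\triangle_{\partial X}$, so they mix sectors; comparing the true collar operator to the solvable model then requires locating every eigenvalue of each cluster of a non-separable operator whose perturbation has $O(1)$ operator norm---your single-radial-state expectation values do not do this, and supplying it (say, by uniform Agmon localization of all cluster eigenfunctions) is harder than the proposition itself. (One worry you raise is, by contrast, harmless: a bounded index shift displaces eigenvalues by $O(1)$ level spacings, i.e.\ $O(n^{-3})\leq O(n^{-\nu})$, which the statement absorbs into $C$.) The repair is not heavier machinery but a better cutoff: rescale it with $n$ as above, and your first two paragraphs essentially become the paper's proof.
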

\begin{remark}
	For any compact $K\Subset \bbR_{\mathsf{r}_0}\times (0,\infty)_{\mathsf{Z}}$, $E_n = \mathsf{Z}^2/(4 n^2)  + O_K(\mathsf{r}_0\mathsf{Z}^3 /n^4)$ for all $(\mathsf{r}_0,\mathsf{Z})\in K$. In particular, $E_n\downarrow 0$ as $n\to\infty$, for each individual $\mathsf{r}_0,\mathsf{Z}$, at an $n^{-2}$ rate, so an interval of size $n^{-\nu}$ is small relative to $E_n$, and the existence of infinitely many bound states follows from the proposition.
\end{remark}
\begin{proof}
	If $E_n$ is an eigenvalue of $P$, then there is nothing to prove, so assume otherwise. 
	Suppose that $v\in C^\infty_{\mathrm{c}}(X^\circ)$ is supported in the boundary collar and only depends on $r$.  We will construct $v=v[E_n]$ such that $\lVert v \rVert_{\calH}^{-1} \lVert (P+E_n) v \rVert_{\calH} \preceq n^{-\nu}$. This can be rewritten in terms of the resolvent $R(-E_n) = (P+E_n)^{-1}$, which is a bounded self-adjoint map on $\calH$: 
	\begin{equation}
	 n^{\nu} \preceq \lVert R(-E_n) (P+E_n) v \rVert_{\calH} \lVert (P+E_n) v \rVert_{\calH}^{-1} \leq \lVert R(-E_n) \rVert_{\mathrm{Op}} = 1/d(-E_n,\sigma(P)) ,
	\end{equation}
	where $d(-E_n,\sigma(P))$ is the distance from $-E_n$ to the spectrum $\sigma(P)$ of $P$. Thus, 
	\begin{equation}
	d(-E_n,\sigma(P)) \preceq n^{-\nu}.
	\end{equation}
	There is therefore a point of the spectrum within distance $O(n^{-\nu})$ of $E_n$. Since $E_n\sim 1/n^2$, if $n$ is sufficiently large this has to be an eigenvalue rather than a point in the continuous spectrum $[0,\infty)$. (And, for $n$ bounded, one can just take $C$ sufficiently large to make the proposition hold.)
	
	Letting
	\begin{equation} 
	P_0(E) = -\Big(1- \frac{\mathsf{r}_0}{r}\Big) \frac{\partial^2}{\partial r^2}  - \Big(\frac{d-1}{r} + \frac{\mathsf{r}_0 (3-d)}{r^2}\Big) \frac{\partial}{\partial r} +E - \frac{\mathsf{Z}}{r}   - \frac{(d^2-4d+3)}{4r^2} + \frac{\mathsf{r}_0  (d^2-8 d+15)}{4r^3},
	\end{equation}
	$v$ will be chosen such that 
	\begin{equation} 
	\lVert r^{(d-1)/2}  v \rVert_{L^2(\mathsf{R}_0,\infty)}^{-1}\lVert r^{(d-1)/2} P_0(E_n) v \rVert_{L^2(\mathsf{R}_0,\infty)} \preceq n^{-3},
	\label{eq:misc_047}
	\end{equation} 
	where $\mathsf{R}_0 = \max\{0,\mathsf{r}_0\}$. In addition, $v$ will be supported in $\{r \geq \Omega(n^2)\}$.
	Let us verify that this suffices. We can write $P= P_0 + r^{-(1+\delta)} Q_2$ for some $Q_2 \in S^0 \operatorname{Diff}_{\mathrm{sc}}^2(X)$. Thus, 
	\begin{equation}
	\lVert (P+E_n) v \rVert_{\calH} \leq \lVert P_0(E_n) v \rVert_{\calH} + \lVert r^{-(1+\delta)} Q_2 v \rVert_{\calH}. 
	\end{equation}
	Since $Q_2$ is bounded as a map $H^2(X) \to L^2(X)$, we have 
	\begin{equation} 
	\lVert r^{-(1+\delta)} Q_2 v \rVert_\calH = O(n^{-(1+\delta)} \lVert r^{(d-1)/2} v \rVert_{H^2(R,\infty)}),
	\end{equation} 
	using \cref{eq:fundbd}.
	Despite $P_0(E)$ not being uniformly elliptic as $E\to 0$, we can elementarily bound 
	\begin{align} 
	\lVert r^{(d-1)/2} v \rVert_{H^2(R,\infty)}^2 &\preceq \lVert r^{(d-1)/2} P_0(E_n) v \rVert_{L^2(R,\infty)}^2 + \lVert r^{(d-1)/2} v \rVert_{H^1(R,\infty)}^2  \\
	&\preceq \lVert r^{(d-1)/2} P_0(E_n) v \rVert_{L^2(R,\infty)}^2 + \varepsilon \lVert r^{(d-1)/2} v \rVert_{H^2(R,\infty)}^2 + \varepsilon^{-1} \lVert r^{(d-1)/2} v \rVert_{L^2(R,\infty)}^2 
	\label{eq:misc_7zz}
	\end{align} 
	for any $\varepsilon>0$, where the constants are independent of $\varepsilon$, $R\gg 0$, and $n$. Taking $\varepsilon$ sufficiently small, we can absorb the second term on the right-hand side of \cref{eq:misc_7zz} into the left-hand side, yielding 
	\begin{equation}
	\lVert r^{(d-1)/2} v \rVert^2_{H^2(R,\infty)} \preceq  \lVert r^{(d-1)/2} P_0(E_n) v \rVert_{L^2(R,\infty)}^2 + \lVert r^{(d-1)/2} v \rVert_{L^2(R,\infty)}^2,
	\end{equation} 
	where now $\varepsilon$ has been fixed.
	Combining all of this, we have, using \cref{eq:fundbd} and \cref{eq:misc_047}, 
	\begin{equation}
	\frac{\lVert (P+E_n) v \rVert_{\calH}}{\lVert v \rVert_{\calH}}
	\preceq 
	\frac{\lVert r^{(d-1)/2} P_0(E_n) v \rVert_{L^2(\mathsf{R}_0,\infty)}}{\lVert r^{(d-1)/2}  v \rVert_{L^2(\mathsf{R}_0,\infty)}} + O \Big( \frac{1}{n^{2(1+\delta)}} \Big) = O \Big( \frac{1}{n^\nu}\Big),
	\end{equation}
	as desired. 
	
	The construction of $v[E_n]$ is as follows. 
	For $E\geq 0$, consider $P_0(E)$ 
	on $(\mathsf{R}_0,\infty)$. The essentially unique solution $u=u[E]$ to $P_0(E)u = 0$ decaying exponentially as $r\to\infty$ is given by 
	\begin{equation}
	u[E](r) = C r^{(3-d)/2} (r-\mathsf{r}_0)^{-1}  e^{-\sqrt{E}(r-\mathsf{r}_0)} U\Big(  - \frac{\mathsf{Z}- \mathsf{r}_0 E}{2E^{1/2}},0, 2 E^{1/2} (r-\mathsf{r}_0) \Big),
	\label{eq:misc_n67}
	\end{equation} 
	where $U(a,b,z)$ denotes Tricomi's confluent hypergeometric function and $C=C[E]$ is an arbitrary nonzero factor. 
	For any four positive real numbers $\lambda<\lambda_0<\Lambda_0<\Lambda$, fix a function $\chi \in C_{\mathrm{c}}^\infty(\bbR)$ that satisfies $\operatorname{supp} \chi \Subset (\lambda,\Lambda)$,  $\chi(\rho)=1$ for all $\rho\in [\lambda_0,\Lambda_0]$, and $0\leq \chi(\rho)\leq 1$ for all $\rho\in \bbR$. Let $v_\chi[E](r) = \chi(2\mathsf{Z} E^{-1} (r-\mathsf{R}_0)^{-1})u[E](r)$. The rest of this section will be devoted to the check that $v=v_\chi[E]$ satisfies \cref{eq:misc_047} when $E=E_n$.
\end{proof}

The quantity $E_n$ satisfies the quadratic equation $4 E_n n^2= (\mathsf{Z} - \mathsf{r}_0 E_n)^2$, which means that the $a$-parameter in $U(a,b,z)$ in \cref{eq:misc_n67} is $-n$.
By choosing $C[E_n]$ appropriately, we can arrange that the function $u$ defined by \cref{eq:misc_n67} is
\begin{equation} 
	u[E_n](r) =  r^{(3-d)/2} \psi_n(  n E_n^{1/2}(r-\mathsf{r}_0)),
	\label{eq:misc_b65}
\end{equation} 
where 
\begin{equation}
	\psi_n(r) =   \sqrt{\frac{1}{ \pi n^5} } e^{-  r/n} L_{n-1}^1\Big(\frac{2r}{n}\Big)
	\label{eq:psin}
\end{equation}
denotes the $n$th s-orbital \textit{hydrogen wavefunction} \cite[Chapter X]{LandauLifshitz}\cite[\S18.3]{Hall}. Here, $L^1_{n}(z) = (n!)^{-1} z^{-1} e^z \frac{\mathrm{d}^n}{\mathrm{d}z^n} (e^{-z} z^{n+1})$ is a generalized Laguerre polynomial. 

The coefficient in \cref{eq:misc_b65} has been chosen for later convenience. 
For all $k\in \bbN$, we have $r^{k/2}\psi_n(r) \in L^2(\bbR^{\geq 0}_r)$, and the normalization is such that 
\begin{equation}
	\int_0^\infty 4\pi r^2 \psi_n^2(r) \dd r = 1. 
	\label{eq:normalization}
\end{equation}
More generally,  for any $k\in \bbN^+$, 
\begin{equation}
	\int_0^\infty r^{k} \psi_n^2(r) \dd r = \frac{f_k(n^2)}{n^2} 
	\label{eq:misc_btg}
\end{equation}
for some polynomial $f_k$ of degree $k-1$ with positive leading coefficient, which can be proven using the \emph{Kramers--Pasternack}  \cite{Pasternack}\cite{Kramers} recurrence relation. This computation can be found in many references, including \cite[Eq. 2, 3]{Pasternack}.
Thus, for any polynomial $g(r)\in \bbR[r]$ of degree $k\geq 1$ with positive leading coefficient, 
\begin{equation} 
	c n^{2k-4} \leq \int_0^\infty g(r) \psi_n^2(r) \dd r \leq  C n^{2k-4}
	\label{eq:misc_cn3}
\end{equation} 
for $n$ sufficiently large, 
for some $g$-dependent $c,C>0$. For example, 
\begin{equation}
	\int_0^\infty 4\pi r \psi_n^2(r) \dd r = \frac{1}{n^2}, \qquad \int_0^\infty 4\pi  r^3 \psi_n^2(r) \dd r = \frac{3n^2}{2}. \label{eq:misc_btk}
\end{equation}
The $k=0$ case of \cref{eq:misc_btg} is degenerate, and requires a somewhat different argument, e.g.\ using \textit{Pasternack's inversion relation}, which is also from \cite{Pasternack}. The result is
\begin{equation} 
	\int_0^\infty 4\pi \psi_n^2(r) \dd r = \frac{1}{n^3}. \label{eq:misc_btl}
\end{equation} 
These moment formulas will be our main input to the calculations below, making up for partial analytical understanding of the operator $P_0(E)$ near $\{E=\mathsf{Z}/r\}$ in the $E\to 0$ limit. The key point is that the equations \cref{eq:misc_btg}, \cref{eq:misc_cn3}, \cref{eq:misc_btk}, \cref{eq:misc_btl} tell us (via Markov's inequality) something about the concentration of the probability measure $4\pi r^2\psi_n^2(r) \dd r$ in the limit where $n\to\infty$.

The wavefunctions $\psi_n$ for large $n$ are known as \textit{Rydberg states} in the physics literature, where they are used to model atomic and molecular electrons on the threshold of ionization. 
The $n\to\infty$ behavior of the generalized Laguerre polynomials $L_{n-1}^1$ appearing in \cref{eq:psin} is very well understood, and we could, in principle, use this to get very precise asymptotic statements about $\psi_n(r)$ in the Rydberg limit. However, as this is a bit technically involved, and since an elementary argument suffices for the application above, we only carry out the elementary argument here. We summarize the upshot of the more precise analysis in \Cref{rem:precise}, but the proof is omitted.

Set, for each $n\in \bbN^+$,  
\begin{align}
	\begin{split}
		w_\chi[E_n](r) &= \sqrt{4\pi} (nE_n^{1/2})^{3/2}(r-\mathsf{r}_0) r^{(d-3)/2} v_\chi[E_n]  \\ 
		&= \sqrt{4\pi} (nE_n^{1/2})^{3/2} (r-\mathsf{r}_0) \chi\Big( \frac{2\mathsf{Z}}{E_n} \frac{1}{r-\mathsf{R}_0} \Big) \psi_n(nE_n^{1/2}(r-\mathsf{r}_0)). 
	\end{split}
\end{align} 
For some $n_0=n_0(\mathsf{Z},\mathsf{r}_0)>0$, we have, for all $n\geq n_0$, estimates 
\begin{equation}
	\lVert w_\chi[E_n] \rVert_{L^2(\mathsf{r}_0,\infty)} \preceq_{\mathsf{Z},\mathsf{r}_0,n_0} \lVert r^{(d-1)/2} v_\chi[E_n] \rVert_{L^2(\mathsf{R}_0,\infty)} \preceq_{\mathsf{Z},\mathsf{r}_0,n_0} \lVert w_\chi[E_n] \rVert_{L^2(\mathsf{r}_0,\infty)}, 
\end{equation}
so estimating $\lVert r^{(d-1)/2} v_\chi[E] \rVert_{L^2(\mathsf{R}_0,\infty)}$ amounts to estimating $\lVert w_\chi[E] \rVert_{L^2(\mathsf{r}_0,\infty)}$. 

When $\chi$ is close to the indicator function $1_{[\lambda,\Lambda]}$ in a suitable norm and $n$ is large, then the quantity 
\begin{align} 
	\begin{split} 
		\lVert w_\chi[E_n] \rVert_{L^2(\mathsf{r}_0,\infty)}^2 &= \int_{\mathsf{r}_0}^\infty 4 \pi n^3 E_n^{3/2} (r-\mathsf{r}_0)^2 \chi\Big( \frac{2\mathsf{Z}}{E_n} \frac{1}{r-\mathsf{R}_0} \Big)^2\psi_n(nE_n^{1/2}(r-\mathsf{r}_0))^2 \dd r \\ 
		&= \int_{0}^\infty 4 \pi r^2   \chi\Big( \frac{2\mathsf{Z} n}{E_n^{1/2}} \frac{1}{r- n E_n^{1/2} (\mathsf{R}_0-\mathsf{r}_0)} \Big)^2\psi_n(r )^2 \dd r 
		\label{eq:misc_b53}
	\end{split} 
\end{align}
has, according to Born's rule, the following physical interpretation: it is (approximately) the probability that an electron in the s-orbital in the $n$th hydrogen shell appears in the annulus $\{n^2 \Lambda^{-1} < r < n^2 \lambda^{-1}\}$ when the electron's position is measured. This annulus scales quadratically with $n$.

\begin{figure}
	\includegraphics[width = .65\textwidth]{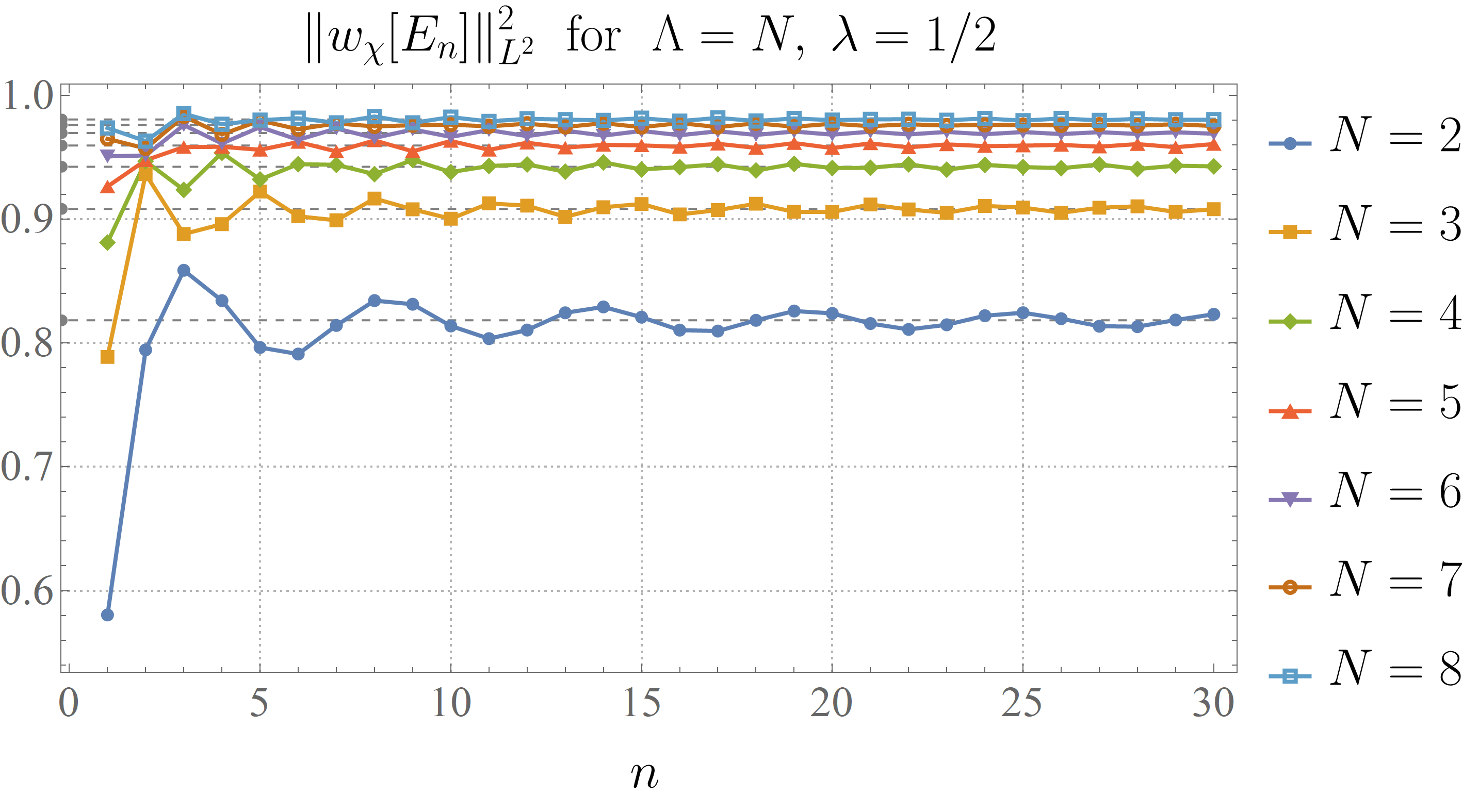}
	\caption{The $L^2$-norms $\smash{\lVert w_\chi [E_n]\rVert_{L^2}^2}$ of the cut-off hydrogen wavefunctions $w_\chi[E_n]$, with $\chi = 1_{[\lambda,\Lambda]}$ an indicator function, versus $n$. The function for the $7$ values $\Lambda\in \{2,\ldots,8\}$ are shown. The other parameters have been fixed at $\lambda=1/2$, $\mathsf{Z}=1$, and $\mathsf{r}_0=.8$. Dashed horizontal lines, marking the values of the $n\to\infty$ limits according to \Cref{rem:precise}, have been drawn at each of the $7$ vertical coordinates $2\pi^{-1}(N^{-1}(N-1)^{1/2} + \operatorname{arctan}((N-1)^{1/2}))$, for $N\in \{2,\ldots,8\}$.}
	\label{fig:main}
\end{figure}

\begin{proposition}
	Fix $\varepsilon >0$, and suppose that $\varepsilon \leq \Lambda_0$ and $\lambda_0\leq \varepsilon^{-1}$. 
	There exists some constant $C=C(\mathsf{Z},\mathsf{r}_0,\varepsilon)>0$, depending on $\mathsf{Z}$, $\mathsf{r}_0$, and $\varepsilon$, but nothing else, such that 
	\begin{equation} 
		1 - \frac{4}{\Lambda_0}- \frac{3\lambda_0 }{8} -  \frac{C}{n^2}  \leq \lVert w_\chi[E_n] \rVert_{L^2(\mathsf{r}_0,\infty)}^2   \leq  1
		\label{eq:cutoff_est}
	\end{equation}
	for all $n\in \bbN^+$. 
	\label{prop:cutoff_est}
\end{proposition}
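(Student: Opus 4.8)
The plan is to establish the two inequalities in \cref{eq:cutoff_est} separately, the upper one being immediate and the lower one reducing to a tail estimate for the hydrogenic probability measure $4\pi r^2\psi_n(r)^2\dd r$. For the upper bound, since $0\leq\chi\leq 1$ we have $\chi^2\leq 1$ pointwise, so the representation \cref{eq:misc_b53} together with the normalization \cref{eq:normalization} gives $\lVert w_\chi[E_n]\rVert_{L^2(\mathsf{r}_0,\infty)}^2\leq\int_0^\infty 4\pi r^2\psi_n(r)^2\dd r=1$ at once. For the lower bound, I would write, again from \cref{eq:misc_b53} and \cref{eq:normalization},
\[
\lVert w_\chi[E_n]\rVert_{L^2(\mathsf{r}_0,\infty)}^2 = 1 - \int_0^\infty 4\pi r^2\big(1-\chi(\rho_n(r))^2\big)\psi_n(r)^2\dd r,
\]
where $\rho_n(r)=A_n/(r-B_n)$ with $A_n=2\mathsf{Z}n/E_n^{1/2}$ and $B_n=nE_n^{1/2}(\mathsf{R}_0-\mathsf{r}_0)$. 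The quadratic relation $4E_nn^2=(\mathsf{Z}-\mathsf{r}_0 E_n)^2$ gives $2nE_n^{1/2}=\mathsf{Z}-\mathsf{r}_0 E_n$ once $n$ is large enough that $\mathsf{Z}-\mathsf{r}_0 E_n>0$, whence $A_n=4n^2/(1-\mathsf{r}_0 E_n/\mathsf{Z})=4n^2(1+O(n^{-2}))$ and $B_n=(\mathsf{R}_0-\mathsf{r}_0)(\mathsf{Z}-\mathsf{r}_0 E_n)/2=O(1)$, with all constants depending only on $\mathsf{Z},\mathsf{r}_0$. The whole task is then to bound the loss integral by $4/\Lambda_0+3\lambda_0/8+O(n^{-2})$.

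Because $\rho_n$ is strictly decreasing on $(B_n,\infty)$ and $\chi(\rho)^2=1$ for $\rho\in[\lambda_0,\Lambda_0]$, the integrand $1-\chi(\rho_n(r))^2$ vanishes on the band $\{B_n+A_n/\Lambda_0\leq r\leq B_n+A_n/\lambda_0\}$ and is bounded by $1$ on its complement. Bounding $1-\chi^2\leq 1$ there splits the loss as
\[
\int_0^\infty 4\pi r^2\big(1-\chi(\rho_n)^2\big)\psi_n^2\dd r \leq \int_0^{T_n}4\pi r^2\psi_n^2\dd r + \int_{S_n}^\infty 4\pi r^2\psi_n^2\dd r,
\]
where $T_n=B_n+A_n/\Lambda_0$ and $S_n=B_n+A_n/\lambda_0\geq A_n/\lambda_0$ (using $B_n\geq 0$, which holds since $\mathsf{R}_0-\mathsf{r}_0\geq 0$). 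The first term captures the near-origin loss, where the cutoff kills the small-$r$ part of the orbital, and the second the far-field loss.

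Each piece is then controlled by an elementary moment comparison, which is where \cref{eq:misc_btk} enters. On $\{r\leq T_n\}$ one has $r^2\leq T_n r$, so the first term is at most $T_n\int_0^\infty 4\pi r\psi_n^2\dd r=T_n/n^2=4(1+O(n^{-2}))/\Lambda_0+B_n/n^2$; the constraint $\Lambda_0\geq\varepsilon$ bounds the resulting relative error uniformly by $O(1/(\varepsilon n^2))$. On $\{r\geq S_n\}$ one has $r^2\leq r^3/S_n$, so the second term is at most $(1/S_n)\int_0^\infty 4\pi r^3\psi_n^2\dd r=3n^2/(2S_n)\leq 3n^2\lambda_0/(2A_n)=(3\lambda_0/8)(1+O(n^{-2}))$; the constraint $\lambda_0\leq\varepsilon^{-1}$ keeps this error $O(1/(\varepsilon n^2))$ as well. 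Adding the two pieces gives a loss bound $4/\Lambda_0+3\lambda_0/8+Cn^{-2}$ with $C=C(\mathsf{Z},\mathsf{r}_0,\varepsilon)$, which is exactly \cref{eq:cutoff_est}.

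The only real obstacle is bookkeeping of constants: the nontrivial content is that the leading terms emerge as \emph{exactly} $4/\Lambda_0$ and $3\lambda_0/8$ rather than merely up to a multiplicative factor, and this hinges precisely on $A_n=4n^2(1+O(n^{-2}))$ together with the exact moments $\int 4\pi r\psi_n^2\dd r=n^{-2}$ and $\int 4\pi r^3\psi_n^2\dd r=3n^2/2$ from \cref{eq:misc_btk}; a slip in any of these constants would spoil the sharp form. A secondary subtlety is the claim ``for all $n$'': the estimates above require $n$ large enough for the stated asymptotics of $A_n,B_n$, but for the finitely many smaller $n$ in the range where $E_n$ is defined the inequality is vacuous, since $\lVert w_\chi[E_n]\rVert_{L^2(\mathsf{r}_0,\infty)}^2\geq 0$ while the left-hand side of \cref{eq:cutoff_est} can be made $\leq 0$ by enlarging $C$ to dominate $n^2$ over that finite range; this enlargement preserves the dependence of $C$ on $(\mathsf{Z},\mathsf{r}_0,\varepsilon)$ alone.
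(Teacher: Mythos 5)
Your proposal is correct and follows essentially the same route as the paper's own proof: the same upper bound via $\chi\leq 1$ and the normalization \cref{eq:normalization}, the same decomposition of the loss integral into a near piece and a far piece at the thresholds $B_n+A_n/\Lambda_0$ and $B_n+A_n/\lambda_0$, and the same two Markov bounds using the first and third moments from \cref{eq:misc_btk}. Your write-up is in fact slightly more explicit than the paper's (which ends with ``combining these estimates'') in tracking $A_n=4n^2(1+O(n^{-2}))$, $B_n=O(1)$, the role of $\varepsilon\leq\Lambda_0$, $\lambda_0\leq\varepsilon^{-1}$, and the trivial enlargement of $C$ to cover small $n$.
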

\begin{proof}
	The upper bound in \cref{eq:cutoff_est} is just a consequence of \cref{eq:normalization}, \cref{eq:misc_b53}, and the assumption $\chi \leq 1$. In order to get the lower bound, we split
	\begin{equation}
		\lVert w_\chi [E_n] \rVert_{L^2(\mathsf{r}_0,\infty)}^2 = 1 -\int_0^\infty 4\pi r^2 \Big[1 -  \chi\Big( \frac{2\mathsf{Z} n}{E_n^{1/2}} \frac{1}{r- nE_n^{1/2} (\mathsf{R}_0-\mathsf{r}_0)} \Big)^2 \Big] \psi_n^2(r) \dd r.
	\end{equation}
	Since $\chi$ is identically equal to $1$ on $[\lambda_0,\Lambda_0]$, and since $\chi \leq 1$, 
	\begin{equation}
		\int_0^\infty 4\pi r^2 \Big[1 -  \chi\Big( \frac{2\mathsf{Z} n}{E_n^{1/2}} \frac{1}{r- nE_n^{1/2} (\mathsf{R}_0-\mathsf{r}_0)} \Big)^2 \Big] \psi_n^2(r) \dd r  \leq I_1 + I_2, 
	\end{equation}
	where 
	\begin{equation}
		I_1 = \int_0^{2\mathsf{Z} n  E_n^{-1/2} \Lambda_0^{-1} + n E_n^{1/2} (\mathsf{R}_0-\mathsf{r}_0)} 4\pi r^2 \psi_n^2(r) \dd r,\quad
		I_2 = \int_{2\mathsf{Z} n  E_n^{-1/2} \lambda^{-1}_0 + n E_n^{1/2} (\mathsf{R}_0-\mathsf{r}_0)}^\infty 4\pi r^2 \psi_n^2(r) \dd r. 
	\end{equation}
	We control these using two Markov bounds:
	\begin{align}
		I_1 &\leq \Big(\frac{2\mathsf{Z} n}{E_n^{1/2}\Lambda_0} + n E_n^{1/2} (\mathsf{R}_0-\mathsf{r}_0)\Big)\int_0^\infty 4\pi r \psi_n^2(r) \dd r = \frac{1}{n^2} \Big(\frac{2\mathsf{Z} n}{E_n^{1/2} \Lambda_0} + n E_n^{1/2} (\mathsf{R}_0-\mathsf{r}_0)\Big),\\
		I_2 &\leq \frac{\lambda_0}{2\mathsf{Z} n  E_n^{-1/2} +  n E_n^{1/2}\lambda_0 (\mathsf{R}_0-\mathsf{r}_0)}\int_0^\infty 4\pi r^3 \psi_n^2(r) \dd r \leq \frac{1}{2 }\frac{3n^2\lambda_0}{(2\mathsf{Z} n  E_n^{-1/2} + n E_n^{1/2} \lambda_0 (\mathsf{R}_0-\mathsf{r}_0))}. 
	\end{align}
	Combining these estimates, we get \cref{eq:cutoff_est}. 
\end{proof}

\begin{figure}
	\includegraphics[width = .675\textwidth]{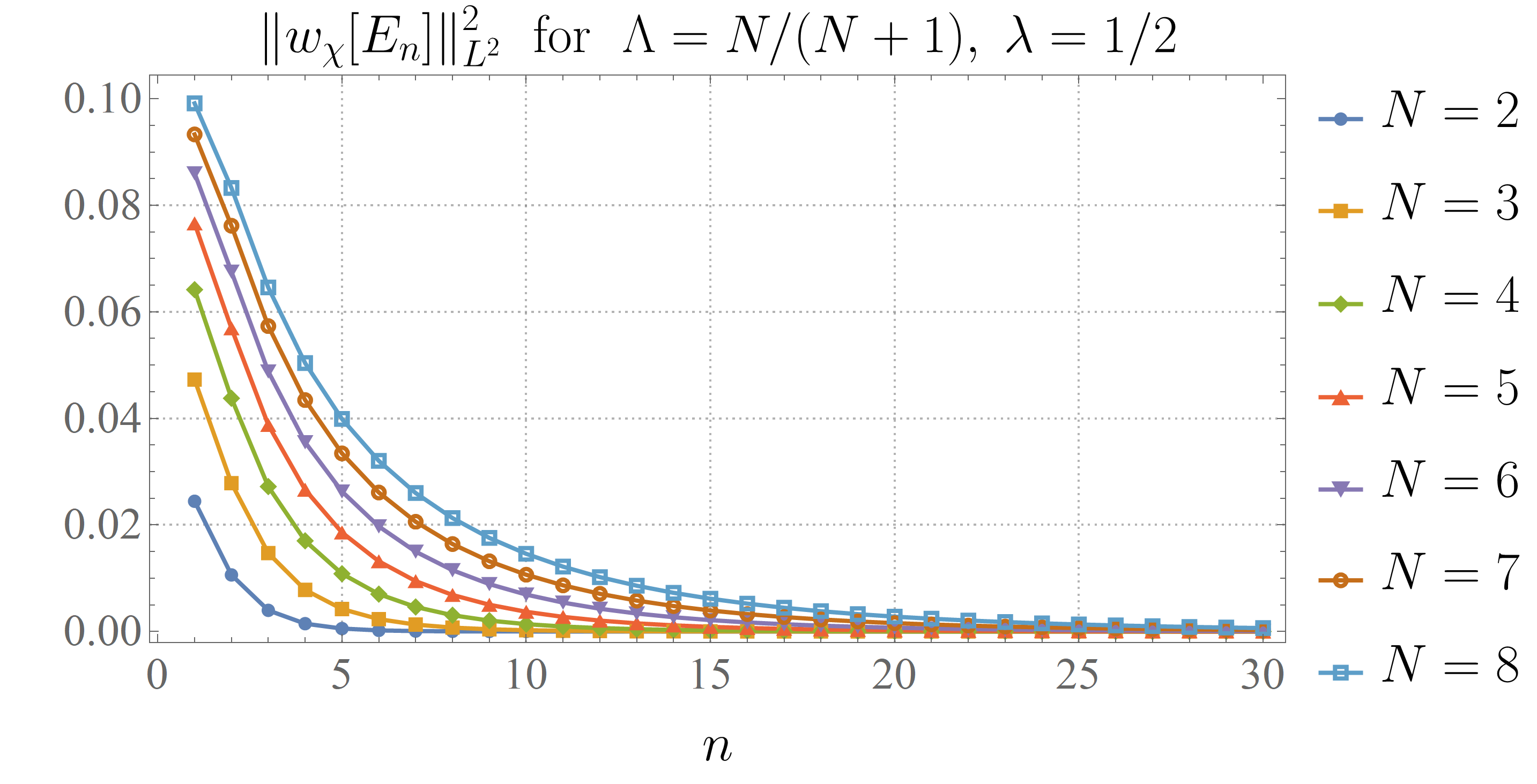}
	\caption{The quantities $\smash{\lVert w_\chi [E_n]\rVert_{L^2}^2}$ versus $n$, but now for $\Lambda = N/(N+1)<1$, $N\in \{2,\ldots,8\}$. In contrast to the situation with $\Lambda>1$, we see that the norms converge quickly (in fact, superpolynomially quickly, but we have not given the proof) to zero. Such values of $\Lambda$ are therefore unsuitable for the variational argument.}
\end{figure}

The key is, as long as $\Lambda_0$ is sufficiently large and $\lambda_0$ is sufficiently small, $\inf_{n\in \bbN^+} \lVert w_\chi[E_n] \rVert_{L^2(\mathsf{r}_0,\infty)} >0$.
The proof did not require that $\chi$ be differentiable; the same estimates (with $\lambda=\lambda_0$ and $\Lambda=\Lambda_0$) hold if $\chi= 1_{[\lambda,\Lambda]}$, and \Cref{fig:main} shows a plot of $\smash{\lVert w_\chi[E_n] \rVert_{L^2(\mathsf{r}_0,\infty)}^2}$ versus $n$ in this case.

\begin{remark}
	\label{rem:precise}
	For each $n\in \bbN^+$, let $\mu_n$ denote the probability measure on $[0,\infty)_{\hat{r}}$ whose density is given by $4\pi n^6 \hat{r}^2 \psi_n^2(n^2\hat{r}) \dd \hat{r}$.
	Using Erd\'elyi's uniform asymptotics for the Laguerre polynomials \cite{Erdelyi1, Erdelyi2}, it is possible to prove that these measures converge in law to the measure given by 
	\begin{equation}
		\mu_\infty(\hat{r}) = \frac{2}{\pi}   \Big(\frac{1}{\hat{r}} - 1 \Big)^{-1/2} 1_{\hat{r} \in [0,1]} \dd \hat{r}.
	\end{equation}
	It follows that $\lVert w_\chi[E_n] \rVert_{L^2(\mathsf{r}_0,\infty)}^2 \to \int_0^1 \chi(1/\hat{r}) \mu_\infty(\hat{r})$ as $n\to\infty$. By the portmanteau theorem \cite[Theorem 25.8]{Billingsley}, this holds even if $\chi = 1_{[\lambda,\Lambda]}$, as long as $1\notin \{\lambda,\Lambda\}$, in which case 
	\begin{align}
		\begin{split} 
			\lVert w_\chi[E_n] \rVert_{L^2(\mathsf{r}_0,\infty)}^2 &\to \frac{2}{\pi} \int_{\min\{1,1/\Lambda\}}^{\min\{1,1/\lambda\}} \Big(\frac{1}{\hat{r}} - 1 \Big)^{-1/2} \dd \hat{r} \\ 
			&= 
			\begin{cases}
				0 & (\Lambda<1), \\ 
				2\pi^{-1} [\Lambda^{-1}\sqrt{\Lambda-1 } + \operatorname{arctan}((\Lambda-1)^{1/2}) ]& (\Lambda>1>\lambda), \\
				2\pi^{-1} [L^{-1}\sqrt{L-1 } + \operatorname{arctan}((L-1)^{1/2})]^{L=\Lambda}_{L=\lambda} & (\lambda > 1),
			\end{cases}
		\end{split} 
		\label{eq:sharp_erd}
	\end{align}
	as $n\to\infty$. 
	Moreover, in the $\Lambda<1$ case, the decay to $0$ occurs at an exponential rate. For the values of $\lambda,\Lambda$ depicted in \Cref{fig:main}, we have marked the quantity on the right-hand side of \cref{eq:sharp_erd} via dashed horizontal lines.
\end{remark}

We also need to handle a derivative:
\begin{proposition}
	If $\lambda_0$ is sufficiently small and $\Lambda_0$ is sufficiently large, then 
	\begin{equation}
		\Big\lVert r^{(d-1)/2} \chi\Big(\frac{2\mathsf{Z}}{E_n (r-\mathsf{R}_0)} \Big) u'[E_n](r) \Big \rVert_{L^2(\mathsf{R}_0,\infty)}^2 \preceq_{d,\mathsf{Z},\mathsf{r}_0,\chi} \frac{1}{n^2} 
	\end{equation}
	for sufficiently large $n\in \bbN^+$, where the constant depends on $d,\mathsf{Z},\mathsf{r}_0,\chi$. 
	\label{prop:der_ctr}
\end{proposition}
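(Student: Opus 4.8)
The plan is to use the explicit formula \cref{eq:misc_b65} for $u[E_n]$, differentiate it, and reduce the claimed bound to a single ``kinetic-energy'' moment of the hydrogen wavefunction $\psi_n$. First I would differentiate $u[E_n](r) = r^{(3-d)/2}\psi_n(nE_n^{1/2}(r-\mathsf{r}_0))$ by the product rule and multiply by $r^{(d-1)/2}$; the powers of $r$ then collapse, giving
\begin{equation}
	r^{(d-1)/2} u'[E_n](r) = \tfrac{3-d}{2}\,\psi_n\big(nE_n^{1/2}(r-\mathsf{r}_0)\big) + nE_n^{1/2}\, r\,\psi_n'\big(nE_n^{1/2}(r-\mathsf{r}_0)\big).
\end{equation}
Via $(a+b)^2\leq 2a^2+2b^2$ and $0\leq\chi\leq 1$, it then suffices to bound the two resulting terms separately, after the change of variables $s = nE_n^{1/2}(r-\mathsf{r}_0)$. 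I would record at the outset that $nE_n^{1/2}\to\mathsf{Z}/2$ as $n\to\infty$ (from the remark following \Cref{prop:infinitude2}), so $nE_n^{1/2}$ is bounded above and below for $n$ large, and that $r\in(\mathsf{R}_0,\infty)$ maps to $s\in(s_0,\infty)$ with $s_0 = nE_n^{1/2}(\mathsf{R}_0-\mathsf{r}_0)\geq 0$.

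The first term carries no weight in $r$, so after discarding $\chi$ it transforms into $\tfrac{(3-d)^2}{4}(nE_n^{1/2})^{-1}\int_{s_0}^\infty\psi_n(s)^2\dd s$. Extending the integral to $[0,\infty)$ and invoking \cref{eq:misc_btl} gives $O(n^{-3})$, which is more than enough. The second term is the essential one. Here I would use that on the support of $\chi(2\mathsf{Z}/(E_n(r-\mathsf{R}_0)))$ one has $r-\mathsf{R}_0 > 2\mathsf{Z}/(E_n\Lambda) = \Omega(n^2)$, hence $r = \Omega(n^2)$ and therefore $r\leq 2(r-\mathsf{r}_0) = 2s/(nE_n^{1/2})$ for $n$ large. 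Consequently $r^2\leq 4s^2/(nE_n^{1/2})^2$ on the support, and the second term is bounded by a constant times $(nE_n^{1/2})^{-1}\int_0^\infty s^2\psi_n'(s)^2\dd s$.

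It thus remains to prove the key estimate $\int_0^\infty s^2\psi_n'(s)^2\dd s = O(n^{-2})$, which is the main point of the calculation. I would establish it by integrating by parts --- the boundary terms vanish by the smoothness of $\psi_n$ at $0$ and its exponential decay at $\infty$ --- to obtain $\int_0^\infty s^2\psi_n'^2\dd s = -\int_0^\infty s^2(\psi_n''+\tfrac{2}{s}\psi_n')\psi_n\dd s$, and then substitute the radial Schrödinger equation $\psi_n''+\tfrac{2}{s}\psi_n' = -2(E_n^{\mathrm{hyd}}+\tfrac{1}{s})\psi_n$, with $E_n^{\mathrm{hyd}} = -1/(2n^2)$, satisfied by the hydrogen wavefunction \cref{eq:psin}. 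This turns the integral into $2E_n^{\mathrm{hyd}}\int_0^\infty s^2\psi_n^2\dd s + 2\int_0^\infty s\psi_n^2\dd s$, which the moment formulas \cref{eq:normalization} and \cref{eq:misc_btk} evaluate to $\tfrac{1}{4\pi n^2}$ exactly; equivalently, this is the virial identity $\langle T\rangle = -E_n^{\mathrm{hyd}}$.

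Combining the two terms yields the asserted $O(n^{-2})$ bound. The only real subtlety --- and the step I would flag as the main obstacle --- is to keep the weight $s^2$ attached throughout the second term. One should \emph{not} bound $r^2$ by a constant plus a multiple of $s^2$ and split off a bare $\int_0^\infty\psi_n'^2\dd s$, since that weightless integral is governed by negative moments of $\psi_n$ (formally $\langle 1/s^2\rangle$, which is awkward for the $\ell=0$ states) and is unpleasant to control directly. Retaining the support condition $r = \Omega(n^2)$, so that $r\asymp s/(nE_n^{1/2})$, is precisely what lets the clean weighted moment $\int_0^\infty s^2\psi_n'^2\dd s$ carry the whole estimate.
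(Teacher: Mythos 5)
Your proof is correct, and it takes a genuinely different route from the paper's. The paper bounds $\int_{\mathsf{R}_0}^\infty r^{d-1}\chi^2 (u'[E_n])^2\dd r$ by integrating by parts in $r$ with the cutoff still inside, producing three terms $I_1,I_2,I_3$ (one carrying $\chi'$, one carrying $u''$); each is then estimated by Cauchy--Schwarz with a small parameter $\varepsilon$, the $u''$ term is converted back to $u,u'$ via the ODE $P_0(E_n)u[E_n]=0$, and the $\varepsilon$-weighted copies of the left-hand side are absorbed. Its inputs are the previously established moment bounds on norms of $v_\chi[E_n]$. You instead differentiate the explicit representation \cref{eq:misc_b65}, exploit the exact collapse of the $r$-powers against the weight $r^{(d-1)/2}$, change variables $s=nE_n^{1/2}(r-\mathsf{r}_0)$ (with $nE_n^{1/2}\to\mathsf{Z}/2$ by \Cref{prop:infinitude2} and its remark), and use the support condition $r=\Omega(n^2)$, hence $r\asymp s/(nE_n^{1/2})$, to reduce the whole estimate to the single kinetic-energy moment $\int_0^\infty s^2\psi_n'(s)^2\dd s$. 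Your evaluation of that moment as exactly $1/(4\pi n^2)$ --- integration by parts, the hydrogen radial ODE $\psi_n''+\tfrac{2}{s}\psi_n'=(\tfrac{1}{n^2}-\tfrac{2}{s})\psi_n$, and the moments \cref{eq:normalization}, \cref{eq:misc_btk} --- is the virial theorem, and all steps (boundary terms, positivity of $s_0=nE_n^{1/2}(\mathsf{R}_0-\mathsf{r}_0)$, the $O(n^{-3})$ zeroth-order term via \cref{eq:misc_btl}) check out. What each approach buys: yours yields a sharp constant, never differentiates the cutoff $\chi$, and in fact proves the statement without the hypothesis that $\lambda_0$ is small and $\Lambda_0$ is large (which the paper's statement carries but its own proof also barely uses); the paper's absorption scheme stays at the level of $u[E_n]$ and $P_0(E_n)$ without unpacking the hydrogen structure, and its machinery and intermediate norm bounds are re-used verbatim in the following proposition, whereas your virial identity is special to this term. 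Your closing caution --- keeping the weight $s^2$ attached rather than splitting off a weightless $\int\psi_n'^2$ --- is well placed, since that is exactly what makes the clean moment suffice.
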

\begin{proof}
	We have
	\begin{equation}
		\Big\lVert r^{(d-1)/2} \chi\Big( \frac{2\mathsf{Z}}{E_n (r-\mathsf{R}_0)} \Big) u'[E_n](r) \Big \rVert_{L^2(\mathsf{R}_0,\infty)}^2 = \int_{\mathsf{R}_0}^\infty \chi\Big( \frac{2\mathsf{Z}}{E_n (r-\mathsf{R}_0)} \Big)^2 u'[E_n](r)^2 r^{d-1} \dd r.  
	\end{equation}
	Integrating the right-hand side by parts, removing the derivative from one factor of $u'[E_n]$, yields $I_1+I_2+I_3$, where 
	\begin{align}
		\begin{split} 
			I_1 &= - (d-1) \int_{\mathsf{R}_0}^\infty \chi\Big( \frac{2\mathsf{Z}}{E_n (r-\mathsf{r}_0)} \Big)^2 u[E_n](r)u'[E_n](r) r^{d-2} \dd r,   \\
			I_2 &=  \frac{4\mathsf{Z}}{E_n }  \int_{\mathsf{R}_0}^\infty \frac{1}{(r-\mathsf{R}_0)^2}\chi'\Big( \frac{2\mathsf{Z}}{E_n (r-\mathsf{R}_0)} \Big) \chi\Big( \frac{2\mathsf{Z}}{E_n (r-\mathsf{R}_0)} \Big) u[E_n](r)u'[E_n](r)  r^{d-1} \dd r, \\
			I_3 &= - \int_{\mathsf{R}_0}^\infty \chi\Big( \frac{2\mathsf{Z}}{E_n (r-\mathsf{R}_0)} \Big)^2 u''[E_n](r) u[E_n](r) r^{d-1} \dd r.
		\end{split} 
	\end{align}
	We bound $I_1$, getting, for sufficiently large $n\in \bbN^+$,  
	\begin{align}
		\begin{split} 
			I_1 &\preceq_d \frac{1}{\varepsilon} \lVert r^{(d-3)/2} v_\chi[E_n] \rVert_{L^2(\mathsf{R}_0,\infty)}^2 +\varepsilon 	\Big\lVert r^{(d-1)/2} \chi\Big( \frac{2\mathsf{Z}}{E_n (r-\mathsf{R}_0)} \Big) u'[E_n](r) \Big \rVert_{L^2(\mathsf{R}_0,\infty)}^2 \\
			&\preceq_{d,\mathsf{Z},\mathsf{r}_0,\chi}  \frac{1}{\varepsilon n^4} +\varepsilon 	\Big\lVert r^{(d-1)/2} \chi\Big( \frac{2\mathsf{Z}}{E_n (r-\mathsf{R}_0)} \Big) u'[E_n](r) \Big \rVert_{L^2(\mathsf{R}_0,\infty)}^2,
		\end{split} 
	\end{align}
	for any $\varepsilon>0$, where the constants in the bounds do not depend on $\varepsilon$. 
	Similarly, for sufficiently large $n \in \bbN^+$, 
	\begin{align}
		\begin{split} 
			I_2 &\preceq_{\mathsf{Z},\mathsf{r}_0,\chi} \frac{1}{\varepsilon E_n^2} \lVert r^{(d-5)/2} v_{\bar{\chi}}[E_n] \rVert_{L^2(\mathsf{R}_0,\infty)}^2 +\varepsilon 	\Big\lVert r^{(d-1)/2} \chi\Big( \frac{2\mathsf{Z}}{E_n (r-\mathsf{R}_0)} \Big) u'[E_n](r) \Big \rVert_{L^2(\mathsf{R}_0,\infty)}^2 \\ 
			& \preceq_{d,\mathsf{Z},\mathsf{r}_0,\chi,\bar{\chi}} \frac{1}{\varepsilon n^4} + \varepsilon \Big\lVert r^{(d-1)/2} \chi\Big( \frac{2\mathsf{Z}}{E_n (r-\mathsf{R}_0)} \Big) u'[E_n](r) \Big \rVert_{L^2(\mathsf{R}_0,\infty)}^2, 
		\end{split} 
	\end{align}
	where we have fixed $\bar{\chi}\in C_{\mathrm{c}}^\infty((0,\infty);[0,1])$ that is identically equal to $1$ on the support of $\chi$. 
	Finally, 
	\begin{align}
		\begin{split} 
			I_3 &\preceq  \frac{1}{\varepsilon}  \lVert  r^{(d-2)/2} v_\chi[E_n] \rVert_{L^2(\mathsf{R}_0,\infty)}^2 + \varepsilon  \Big\lVert r^{d/2} \chi\Big( \frac{2\mathsf{Z}}{E_n (r-\mathsf{R}_0)} \Big) u''[E_n](r) \Big \rVert_{L^2(\mathsf{R}_0,\infty)}^2 \\
			& \preceq_{d,\mathsf{Z},\mathsf{r}_0,\chi} \frac{1}{\varepsilon n^2} + \varepsilon  \Big\lVert r^{d/2} \chi\Big( \frac{2\mathsf{Z}}{E_n (r-\mathsf{R}_0)} \Big) u''[E_n](r) \Big \rVert_{L^2(\mathsf{R}_0,\infty)}^2. 
		\end{split}
	\end{align}
	In order to bound the last term, we use the ODE $P_0(E_n)u[E_n]=0$:
	\begin{multline}
		\Big\lVert r^{d/2} \chi\Big( \frac{2\mathsf{Z}}{E_n (r-\mathsf{R}_0)} \Big) u''[E_n](r) \Big \rVert_{L^2(\mathsf{R}_0,\infty)}^2 \preceq_{d,\mathsf{Z},\mathsf{r}_0,\chi} \\ \frac{1}{n^4} \lVert r^{d/2} v_\chi[E_n] \rVert_{L^2(\mathsf{R}_0,\infty)}^2  + \lVert r^{(d-2)/2} v_\chi[E_n] \rVert_{L^2(\mathsf{R}_0,\infty)}^2 
		+ \Big\lVert r^{(d-2)/2} \chi\Big( \frac{2\mathsf{Z}}{E_n (r-\mathsf{R}_0)} \Big) u'[E_n](r) \Big \rVert_{L^2(r,\mathsf{R}_0)}^2  \\
		\preceq_{d,\mathsf{Z},\mathsf{r}_0,\chi} \Big\lVert r^{(d-1)/2} \chi\Big( \frac{2\mathsf{Z}}{E_n (r-\mathsf{R}_0)} \Big) u'[E_n](r) \Big \rVert_{L^2(r,\mathsf{R}_0)}^2 + \frac{1}{n^2}
	\end{multline}
	for sufficiently large $n \in \bbN^+$. 
	Combining the estimates above, 
	\begin{multline}
		\Big\lVert r^{(d-1)/2} \chi\Big( \frac{2\mathsf{Z}}{E_n (r-\mathsf{R}_0)} \Big) u'[E_n](r) \Big \rVert_{L^2(\mathsf{R}_0,\infty)}^2 \preceq_{d,\mathsf{Z},\mathsf{r}_0,\chi} \Big(\frac{1}{\varepsilon} +\varepsilon\Big)\frac{1}{n^2}  \\ + \varepsilon \Big\lVert r^{(d-1)/2} \chi\Big( \frac{2\mathsf{Z}}{E_n (r-\mathsf{R}_0)} \Big) u'[E_n](r) \Big \rVert_{L^2(\mathsf{R}_0,\infty)}^2, 
	\end{multline}
	where the constant in the bound  is independent of $\varepsilon$. Taking $\varepsilon$ sufficiently small, we can absorb the final term on the right-hand side into the left-hand side to conclude the result.
\end{proof}

\begin{proposition}
	Given the setup above, there exists some $C>0$ (depending on $d,\mathsf{Z},\mathsf{r}_0,\chi$ and nothing else) such that $\lVert r^{(d-1)/2} P_0(E_n) v_\chi[E_n] \rVert_{L^2}^2 \leq C n^{-6}$ for sufficiently large $n \in \bbN^+$. 
\end{proposition}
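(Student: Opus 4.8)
The plan is to exploit that $u[E_n]$ is an \emph{exact} solution of $P_0(E_n)u[E_n]=0$, so that applying $P_0(E_n)$ to $v_\chi[E_n]=\eta\,u[E_n]$, with $\eta(r)=\chi(2\mathsf{Z}E_n^{-1}(r-\mathsf{R}_0)^{-1})$, leaves only commutator terms. Writing $P_0(E_n)=-a\partial_r^2-b\partial_r+c$ with $a=1-\mathsf{r}_0/r$ and $b=(d-1)/r+\mathsf{r}_0(3-d)/r^2$, the Leibniz rule gives
\begin{equation}
	P_0(E_n)v_\chi[E_n] = -a\,\eta'' u[E_n] - 2a\,\eta' u'[E_n] - b\,\eta' u[E_n],
\end{equation}
because the term $\eta\,P_0(E_n)u[E_n]$ vanishes identically. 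In particular, the right-hand side is supported on $\operatorname{supp}\eta'\subseteq\{r:g(r)\in\operatorname{supp}\chi'\}$, where $g(r)=2\mathsf{Z}E_n^{-1}(r-\mathsf{R}_0)^{-1}$.

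Next I would record the sizes of the factors appearing there. Since $\operatorname{supp}\chi'$ is a fixed compact subset of $(0,\infty)$ and $E_n\sim\mathsf{Z}^2/(4n^2)$, the requirement that $g(r)$ be comparable to $1$ forces $r-\mathsf{R}_0$ to be comparable to $\mathsf{Z}/E_n$, hence $r$ comparable to $n^2$ uniformly on $\operatorname{supp}\eta'$. The chain rule then gives $\eta'=\chi'(g)g'=O(n^{-2})$ and $\eta''=\chi''(g)(g')^2+\chi'(g)g''=O(n^{-4})$, while $a=O(1)$ and $b=O(n^{-2})$. Feeding these bounds into the display yields the pointwise estimate
\begin{equation}
	|P_0(E_n)v_\chi[E_n]| \preceq n^{-2}\,\mathbf{1}_{\operatorname{supp}\eta'}\,|u'[E_n]| + n^{-4}\,\mathbf{1}_{\operatorname{supp}\eta'}\,|u[E_n]|.
\end{equation}

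Finally I would take the $r^{(d-1)/2}$-weighted $L^2$-norm. The $u'$ term dominates and contributes $n^{-4}\lVert r^{(d-1)/2}\mathbf{1}_{\operatorname{supp}\eta'}u'[E_n]\rVert_{L^2}^2$; since $\operatorname{supp}\eta'$ sits in the transition region of $\chi$, I can dominate $\mathbf{1}_{\operatorname{supp}\eta'}$ by $\bar\chi(g)$ for a cutoff $\bar\chi$ that equals $1$ there and still satisfies the hypotheses of \Cref{prop:der_ctr}, so that proposition bounds this norm by $O(n^{-2})$ and the term by $O(n^{-6})$. The $u$ term contributes $n^{-8}\lVert r^{(d-1)/2}\mathbf{1}_{\operatorname{supp}\eta'}u[E_n]\rVert_{L^2}^2$, and the norm here is $O(1)$ by the upper bound of \Cref{prop:cutoff_est} together with the $w_\chi[E_n]\leftrightarrow v_\chi[E_n]$ comparison, giving $O(n^{-8})$. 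Adding the two yields $\lVert r^{(d-1)/2}P_0(E_n)v_\chi[E_n]\rVert_{L^2}^2=O(n^{-6})$.

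The main obstacle is the bookkeeping in this last step: one must verify that $\operatorname{supp}\eta'$ really is covered by a cutoff meeting the hypotheses of \Cref{prop:der_ctr}, so that the derivative $u'[E_n]$ is controlled at the sharp $n^{-1}$ rate (in $L^2$) rather than something worse, and that the estimate ``$r$ comparable to $n^2$'' holds with constants uniform in $n$. Once those are in hand, the result follows by substituting the elementary coefficient and chain-rule bounds.
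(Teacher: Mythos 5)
Your proposal is correct and is essentially the paper's own argument: the same Leibniz-rule decomposition into commutator terms supported on $\operatorname{supp}\eta'$ (the paper's $v_1,v_2,v_3$), the same scale bounds $\eta'=O(n^{-2})$, $\eta''=O(n^{-4})$ on the region $r\sim n^2$, and the same final accounting in which the $u$-terms contribute $O(n^{-8})$ while the cross-term $-2a\eta' u'$ dominates at $O(n^{-6})$ via \Cref{prop:der_ctr} applied with the enlarged cutoff $\bar\chi$. The bookkeeping points you flag (that $\bar\chi\in C_{\mathrm{c}}^\infty((0,\infty);[0,1])$, identically $1$ on $\operatorname{supp}\chi$, satisfies the hypotheses of \Cref{prop:der_ctr}, and that $r\sim n^2$ holds uniformly on $\operatorname{supp}\eta'$) are handled the same way in the paper.
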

\begin{proof}
	We write $P_0(E_n) v_\chi[E_n]= v_1+v_2+v_3$, where
	\begin{align}
		\begin{split}
			v_1 &= - \Big(1 - \frac{\mathsf{r}_0}{r} \Big) \Big[ \Big(\frac{4\mathsf{Z}^2}{E^2_n} \frac{1}{(r-\mathsf{R}_0)^4} \chi'' \Big( \frac{2\mathsf{Z}}{E_n}\frac{1}{(r-\mathsf{R}_0)} \Big)  + \frac{4\mathsf{Z}}{E_n} \frac{1}{(r-\mathsf{R}_0)^3} \chi' \Big( \frac{2\mathsf{Z}}{E_n} \frac{1}{(r-\mathsf{R}_0)} \Big) \Big) \Big]u[E_n](r),  \\
			v_2 &= \frac{2\mathsf{Z}}{E_n} \frac{1}{(r-\mathsf{R}_0)^2}\Big( \frac{d-1}{r} +\frac{\mathsf{r}_0(3-d)}{r^2} \Big) \chi' \Big( \frac{2\mathsf{Z}}{E_n} \frac{1}{(r-\mathsf{R}_0)} \Big)  u[E_n](r), \\
			v_3 &= 2\Big(1 - \frac{\mathsf{r}_0}{r} \Big) \Big[ \frac{2\mathsf{Z}}{E_n} \frac{1}{(r-\mathsf{R}_0)^2} \chi' \Big( \frac{2\mathsf{Z}}{E_n} \frac{1}{(r-\mathsf{R}_0)} \Big)\Big] u'[E_n](r). \\
		\end{split}
	\end{align}
	For sufficiently large $n$, we can bound, via the estimates above, 
	\begin{align}
		\begin{split} 
		\lVert r^{(d-1)/2} v_1 \rVert_{L^2(\mathsf{R}_0,\infty)}^2, \lVert r^{(d-1)/2} v_2 \rVert_{L^2(\mathsf{R}_0,\infty)}^2 &\preceq_{d,\mathsf{Z},\mathsf{r}_0,\chi} \frac{1}{n^8} \lVert r^{(d-1)/2} v_{\bar{\chi}}[E_n] \rVert_{L^2(\mathsf{R}_0,\infty)}^2   \\
		&\preceq_{d,\mathsf{Z},\mathsf{r}_0,\chi} \frac{1}{n^8}, 
		\end{split} \label{eq:misc_n00} \\
		\begin{split} 
			\lVert r^{(d-1)/2} v_3 \rVert_{L^2(\mathsf{R}_0,\infty)}^2 &\preceq_{d,\mathsf{Z},\mathsf{r}_0,\chi} \frac{1}{n^4} \Big\lVert r^{(d-1)/2} \bar{\chi}\Big(\frac{2\mathsf{Z}}{E_n(r-\mathsf{R}_0)}\Big) u'[E_n] \Big\rVert_{L^2(\mathsf{R}_0,\infty)}^2 \\
			&\preceq_{d,\mathsf{Z},\mathsf{r}_0,\chi,\bar{\chi}} \frac{1}{n^6},
		\end{split} \label{eq:misc_n01}
	\end{align} 
	where $\bar{\chi}$ is as in the proof of the previous proposition. To get the last estimate, we applied \Cref{prop:der_ctr} with $\bar{\chi}$ in place of $\chi$. 
	Combining \cref{eq:misc_n00} and \cref{eq:misc_n01}, we arrive at the conclusion of this proposition. 
\end{proof}

Combining the propositions in this section, we get the estimate, \cref{eq:misc_047}, needed previously.

\printbibliography

\end{document}